\newtheorem{theorem}{Theorem}[section]
\newtheorem*{remark}{Remark}
\title{Optimal and resilient coordination of virtual batteries in \\ distribution feeders}
\author{Sarnaduti~Brahma$^\ast$, Nawaf~Nazir$^\ast$, Hamid~Ossareh$^\ast$, and Mads~Almassalkhi$^\ast$
% <-this % stops a space
%\thanks{*This work  was  supported  by  the  U.S.  Department  of  Energy’s  Office  of Energy Efficiency and Renewable Energy (EERE) award DE-EE0008006.}% <-this % stops a space
\thanks{$^{\ast}$The authors are affiliated with the Department of Electrical and Biomedical Engineering, The University of Vermont, Burlington, VT 05405, USA. Support from the U.S. Department of Energy award number DE-EE0008006 is gratefully acknowledged.}}%
\date{August 2019}
\begin{document}

\maketitle

\begin{abstract}\label{abstract}
    This paper presents a novel hierarchical framework for real-time, network-admissible coordination of responsive grid resources aggregated into virtual batteries (VBs). In this context, a VB represents a local aggregation of directly controlled loads, such as smart inverters, electric water heaters, and air-conditioners.
    The coordination is achieved by solving an optimization problem to disaggregate a feeder's desired reference trajectory into constraint-aware set-points for the VBs. Specifically, a novel, provably-tight, convex relaxation of the AC optimal power flow (OPF) problem is presented to optimally dispatch the VBs to track the feeder's desired power trajectory. In addition to the optimal VB dispatch scheme, a real-time, corrective control scheme is designed, which is based on optimal proportional-integral (PI) control with anti-windup, to reject intra-feeder and inter-feeder disturbances that arise during operation of the power system.
    Simulation results conducted on a modified IEEE test system demonstrate the effectiveness of the proposed multi-layer VB coordination framework. 
    %effectiveness of the optimal VB dispatch scheme and the and of the disturbance rejection control mechanism.
    %As a result of the decomposition, each feeder can act as a virtual power plant and participate in economic energy markets while at the same time ensuring feasibility of operation within the feeder. A multi-level spatio-temporal coordination framework that captures the aggregate flexibility of DERs is introduced. For the distribution system OPF, a tight convex formulation to track transmission-level signals is presented that operates by tracking the aggregate DER power plus network losses. 
\end{abstract}
% \vspace{-3mm}
\section{Introduction}\label{introduction}
\subsection{Background and Motivation}
Coordinated control of demand-side, distributed energy resources (DERs), such as grid-tied PV inverters, distributed battery storage, and thermostatically controlled loads (TCLs; e.g., water heaters and air conditioners) %, such as smart water heaters, thermostats, and battery storage
is part of the solution that supports a renewable energy future~\cite{schweppe1980homeostatic,brooks:2010demand, callaway2010achieving, brattle:2019LoadFlex2}.
%Realizing this future requires a shift towards demand-side flexibility  (i.e., demand dispatch)~\cite{brooks:2010demand, callaway2010achieving, brattle:2019LoadFlex2}.
%In this paper, we study the real-time coordination of responsive DERs, such as grid-tied PV inverters, battery storage, and smart loads (e.g., water heaters and air conditioners). %, by abstracting their aggregation into ``virtual batteries" (VBs)~\cite{hao2014aggregate, hughes2016identification}.
Much of the recent literature on the  coordination of DERs has focused on distributed control methodologies to turn large-scale aggregations of DERs into dispatchable grid assets (similar to ~\cite{mathieu2014arbitraging, meyn2015ancillary, Almassalkhi2018}). Since the aggregation of DERs is dispatched as a single entity by a centralized coordinator and is subject to power and energy limits, the DER fleet is often referred to as a ``virtual battery" (VB)~\cite{hao2014aggregate, hughes2016identification, Duffaut:2020pvb}. To control and dispatch the VBs, much of the literature has focused on optimizing the VB operation over ISO/TSO market signals, but this does not directly consider the underlying AC network and the distribution system operator (DSO) constraints (e.g., voltage or power limits), see for example~\cite{tindemans2015decentralized, muller2017aggregation}.  %quantifies and aggregates the feeder-level flexibility from DERs and present an algorithm that disaggregates feeder-level control signals among the individual devices at the nodes, but do without considering the AC network. In~\cite{tindemans2015decentralized} the authors analyze multi-timescale services that consider markets and devices across different timescales and include constraints on device operations and end-user comfort. However, they also do not consider the underlying AC network. 

To avoid violating operational limits of the grid and to ensure system reliability with DERs at scale, {\color{black}coordination between a DSO and DER owners and aggregators will become critical. This has spurred a multitude of concepts and models for how DSOs can interact with DERs, aggregators, and whole-sale (transmission) markets~\cite{futureUtilities2015lbnl,kristov2016tale,conED}. In this manuscript, we focus on the so-called ``Market DSO'' model, e.g., see~\cite{futureUtilities2015lbnl}, where the DSO performs coordination and aggregation of DERs to deliver grid services. While such a setup could preclude independent DER aggregators (i.e., increases regulatory complexity), the model simplifies the interaction between wholesale market signals and the DSO and the ideas herein can be adapted further to enable independent ``grid-aware'' DER aggregators~\cite{nazir_CDC}. For other} market-based DER coordination schemes, ``transactive energy'' can engender holistic TSO-DSO-Aggregator participation of DERs~\cite{transEng}. Some of these schemes focus on broadcasting prices directly to devices. However, with large-scale participation of DERs, transactive energy is susceptible to harmful load synchronization effects, power oscillations, and volatile prices, as shown in \cite{nazir2018dynamical}. %In fact, in~\cite{nazir2018dynamical}, the authors present a transactive framework that incorporates a couple grid measurements to allow the DER aggregation to comply with a select few grid constraints during DER control.
%This work considers the problem of VB dispatch and real-time control, where the full AC network in taken into account by leveraging local aggregations of DERs into nodal-level VBs. This allows us to disaggregate market level (feeder) signals into nodal level VB dispatches while taking into account AC grid constraints.    %However, the transactive approach may require detailed and accurate models of the distribution system and a large number of measurements for DER control at scale, which can lead to significant computational complexity. 
%Other innovative energy service providers, such as ConEdison of New York, are moving towards holistic business models for managing and dispatching VBs locally and aggregating VBs to provide energy services at different spatio-temporal scales. 
 \textcolor{black}{Other grid-aware approaches include optimization-based methods to account for AC network constraints}\cite{dall2018optimal}, where VB control is achieved by solving an optimization problem based on AC network models and tracking a Karush-Kuhn-Tucker (KKT) point that satisfies the KKT optimality conditions. However, for non-convex AC OPF, the KKT conditions may not be sufficient to guarantee global optimality. Other optimization schemes can provide market services with VBs without exact grid models nor real-time measurements~\cite{arnold2018model}. However, these methods do not directly incorporate multi-period energy constraints and the KKT point can be sensitive to exogenous disturbances. 
 
 Formulations based on convex relaxations can provide guarantees on feasibility and optimality of solutions when the cost function is monotonic~\cite{li2018convex}. However, optimal tracking of a power reference signal has a non-monotonic cost function, which means that one cannot guarantee that the predicted solutions are network-admissible. 
%This energy coordination includes wholesale market interactions (capacity, arbitrage, and ancillary services) and non-wire services (NWSs) and can take into account distribution grid operations, feeder constraints, and DER operational requirements.
%However, DSPs are largely unproven and while many papers discuss certain aspects of DSPs, such as DER aggregation and coordination, the authors are not aware of a network-aware, scalable, resilient solution that optimizes across markets, AC networks, and devices. 
\begin{figure}[t] 
\centering
\includegraphics[width=0.97\columnwidth]{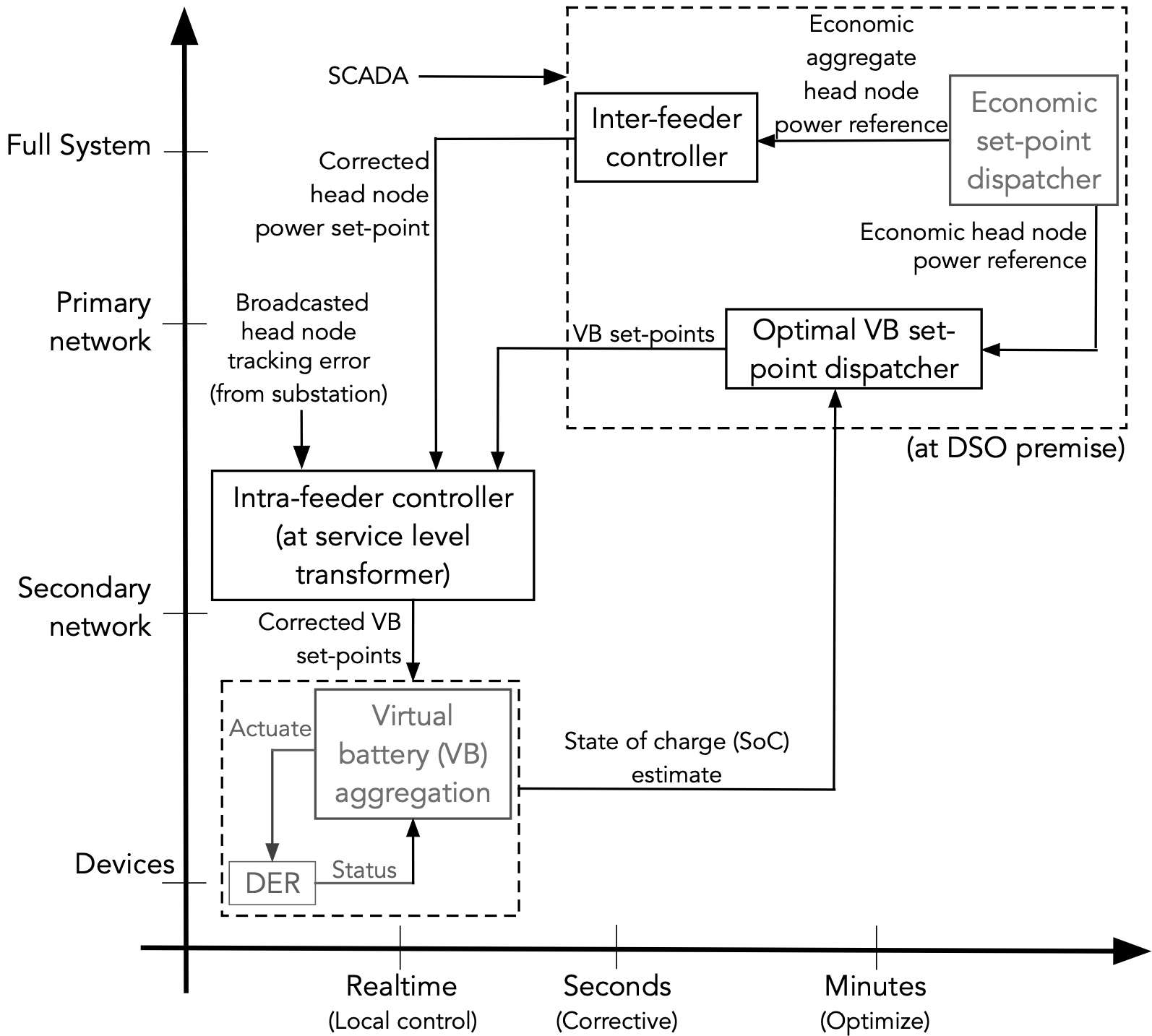}
\caption{The multi-layer VB coordination framework. The blocks with the inter- and intra-feeder controllers and optimal VB dispatcher form the focus of this paper. \textcolor{black}{The DSO runs an AC OPF about every minute to dispatch VBs with optimal set-points. The inter-feeder and the intra-feeder control action, which take place in real-time \textcolor{black}{(about every 5 s and 100 ms respectively)}, are also executed by the DSO. Specifically, the intra-feeder control computation occurs at the feeder's service transformers monitored by the DSO (this requires broadcasting from the substation to the service transformers).}}
\vspace{-15pt}
\label{fig:block_model}
\end{figure}
To address these challenges concerning real-time, optimal, and network-admissible coordination of VBs, this paper presents %theoretical analysis, discussion, and simulation results in support of 
a hierarchical (multi-layer) framework for coordinating demand-side flexibility in the form of VBs (please see  Fig.~\ref{fig:block_model}).  The hierarchical coordination consists of a novel, convex OPF relaxation, which is provably tight at optimality under realistic conditions and generates grid-aware, feasible set-points for the VBs. The optimization is represented by the box ``Optimal VB set-point dispatcher" in Fig.~\ref{fig:block_model}. 

To ensure real-time tracking of VB set-points despite forecast and modeling errors and to ensure resilience, the hierarchy is augmented with a real-time controller to correct the feeder economic set-points. The real-time controller borrows concepts from wide-area control (WAC)~\cite{kundur1994power}, including (local) droop~\cite{brahma2018stochastic} and (regional) automatic generation control, and adapts them to dynamically managing VB power in distribution feeders. 
This results in a VB coordination scheme that enables a feeder head-node to optimally track a power reference while correcting, in real-time, for unexpected small (local, intra-feeder) and large (regional, inter-feeder) disturbances.

\textcolor{black}{This scheme thus effectively deals with stochasticity, both in the input disturbances to these VBs (e.g., solar variations) - through the PI control mechanism, which rejects these disturbances, as well as that in the VB parameters themselves - by re-solving the OPF problem about every minute, and re-tuning the corrective controller gains about every 5 minutes with updated parameters.} Unlike most of the existing literature, the presented framework explicitly and effectively accounts for AC (radial) networks and energy-constrained VB resources and is robust and resilient to grid disturbances. % Specifically, consider a collection of distribution feeders connected to a large substation. Each feeder has DERs that are aggregated into VBs at various nodes (i.e., neighborhoods), 
Similar to~\cite{hughes2016identification,tclvb},  we consider VBs at the scale of 100-200 DERs per VB that are managed locally (e.g., from the same neighborhood) via load control (with full state information) and subject to lags from device dispatch and communications. 

\textcolor{black}{Prior work on hierarchical control of DERs in microgrids (e.g. \cite{bidram2012hierarchical}) has mainly considered  using frequency and voltage droop characteristics to generate active and reactive power set-points for DERs using local measurements of frequency and voltage and compensating for the deviations, but in this paper, we compensate for the deviation in the head node power of the feeder from the economic set-point, thus taking into account the economic trajectory. Moreover, in this paper, the design of the proportional and PI gains is done based on the grid topology and device constraints, unlike local droop-based control strategies that are generally not cognizant of the network. While \cite{baker} develops a local (proportional) controller that incorporates network conditions into gains to minimize voltage deviations with active power injections, it does not consider system-wide power tracking objectives such as an economic trajectory that satisfies voltage limits across the feeder. Reference \cite{dorfler} suggests a Distributed Averaging PI (DAPI) control strategy to ensure proportional power-sharing and economic optimality, but since it requires extensive communication between the DERs, it may not be feasible on a large scale. Moreover, while the droop coefficients in \cite{dorfler} are chosen proportionally according to DER power limits, state-of-charge limits are not considered, and the coefficients are not optimized to minimize head node power deviations from the economic trajectory. In this paper, we overcome the above limitations through a novel hierarchical control of DERs.}
%In our proposed framework, these VBs are dispatched optimally with active and reactive power set-points every 1-5 min to track (in aggregate) an economic power reference trajectory that is received from the market every 5-15 min (e.g., whole-sale energy prices). 
 %Realistic simulation results validate and quantify the benefit of the proposed framework under extreme solar PV penetration.
%\vspace{-1mm}

%\section{Framework for techno-economic coordination}

%This multi-layer coordination framework is depicted hierarchically in Fig.~\ref{fig:block_model}.  %This economic trajectory could be obtained, for example, by optimizing market interactions of demand-side resources by leveraging an aggregate view of the entire distribution system and available resource capabilities. 
To be more specific, the ``optimal VB set-point dispatcher" in Fig.~ \ref{fig:block_model} considers a convex relaxation of the (balanced) AC network model.
%To be specific, the optimal VB dispatch is determined from a novel, multi-period AC OPF formulation that considers a convex relaxation of the (single-phase equivalent) AC network model. 
%However, combining the quadratic power reference tracking objective with a second-order cone relaxation of the AC network constraints renders an objective function that is non-monotonic, which is known to make the relaxation not tight at optimality~\cite{li2018convex}.  
%To overcome this challenge, this paper presents a novel
Furthermore, the convex OPF formulation is proven tight at optimality, which guarantees that the prediction of future physical operating states of the grid and the VBs are accurate. This is achieved by decomposing the feeder head-node (i.e., substation) economic reference into the aggregate VB dispatch, net-demand, and approximated total feeder line losses. 
The key feature here is the use of a first-order prediction model for the line losses to simplify the problem. It is shown through analysis and with simulations that this VB-plus-losses reference can achieve a tight optimal solution under practical conditions. To provide the formulation with updated estimates of feeder losses and network topology, it is assumed that
%changes in nominal demand, 
a distribution system state estimator (DSSE) is available~\cite{dehghanpour2018survey} (not considered in this work).
%The optimal dispatch of set-points to the VBs, which lies at the core of the framework presented in this paper, is obtained by a convex relaxation of a multi-period optimization of VBs for tracking the market's economic power reference. 
%This is illustrated in Fig.~\ref{fig:socp_relax}, which shows the feasible set for a relaxed OPF problem for a two-node network shown in~\ref{fig:twonode_model}.
 %Furthermore, if the market's trajectory exceeds the feeder's capability, it can lead to solutions in the feeder with large fictitious losses resulting in a physically inaccurate solution and poor actual tracking performance.
 
The VB set-points from the above optimization problem are further augmented with a real-time controller, which enables tracking under modeling and net-load variations due to, for example, solar PV injections. The controller is divided into intra-feeder and inter-feeder operations.  The intra-feeder, droop-like proportional controller is implemented to correct the optimal VB set-points in real-time to track the feeder's head-node power reference signal. The inter-feeder proportional-integral (PI) controller with anti-windup allows for real-time disturbance rejection in the event of larger disturbance across multiple feeders, such as malfunctions in the VB's cyber network or the network structure, thus, ensuring resilience.  %causing a number of VBs to stop responding or to saturate, resulting in the net power reference still being tracked by the available flexibility.}
The only measurement necessary for this purpose is the head-node power from the feeders. %In sum, the flexibility in the VBs can be employed to mitigate intra-feeder and inter-feeder disturbances in realtime.
\vspace{-0.4cm}
\subsection{Original Contributions}
In summary, the main contributions of this paper are as follows:
\begin{itemize}
\item A hierarchical framework adapted from wide-area control for coordinated control and optimization of VBs in radial, balanced distribution feeders, that \textcolor{black}{considers multiple spatial and temporal scales.}% for coordinated control and optimization of VBs}, and 
\item A novel second-order conic AC OPF formulation of a multi-period optimization of VBs designed for tracking the desired power reference at the head-node of the feeder. \textcolor{black}{In this formulation, the feeder structure is taken into account and the second-order cone relaxation is proven to be exact even under significant reverse power flow and with a non-monotonic cost function, which is an improvement over the present state-of-science in radial network OPF presented in~\cite{gan2015exact}.} 
%The proposed OPF tracking formulation is proven with KKT analysis that it can guarantee a physically realizable AC solution at optimality.
\item \textcolor{black}{A practically relevant, real-time, corrective VB control scheme for intra-feeder and inter-feeder fluctuations in net-demand and localized communication outages, that requires only the measurement of the head node power, and is \emph{dynamically self-adjusting} based on  estimates of VB power and energy.}
\item Systematic, simulation-based analysis on IEEE distribution feeders is performed, illustrating the validity of the approach.
% \item Analysis of the effect of controller gains on stability and performance.
\end{itemize}
%The remainder of this paper is organized as follows: Section~\ref{sec:convex_form} provides the novel convex formulation for optimally tracking the grid reference trajectory. Section \ref{sec:rr} describes the real-time control of VBs to counter fluctuations in net-demand. Simulation results on realistic test feeder to show the effectiveness of the proposed approach are shown in section \ref{sec:sim} and finally the conclusions and future work are given in section \ref{conc}.
\vspace{-0.4cm}

\subsection{Practical implementation and Data management}\label{sec:data_manage}
In this subsection, we explain how the presented framework (in Fig. \ref{fig:block_model}) can be feasibly implemented in a practical scheme. In the proposed {\color{black} DSO-centric methodology, we assume that the Market DSO acts as the coordinator and aggregator of DERs} and manages the entire scheme. The DSO may also use technology services to manage the DERs via a software interface (e.g., the VB-DER interface in Fig.~\ref{fig:block_model}), but this is to help the DSO reduce costs and manage constraints. The DSO runs an OPF every minute or so for each feeder based on the economic head node power reference. \textcolor{black}{In this regard, it is reasonable to assume that the DSO has access to SCADA data and is aware of the grid topology and receives VB state of charge (SoC) estimates from the VB-DER interface to run the AC OPF.} For the VB interface, the only information required is the corrected VB power set-point. Apart from this, the VB interface does not require any other information and it is not involved in solving an OPF.

The re-tuning of controller gains takes place by executing a nonlinear optimization problem by the DSO about every 5~minutes and only if the model parameters/grid structure change. For that purpose, the requirements are the updated grid topology, VB models, the operating point from the optimal dispatcher, and the VB power and SoC every minute, but only the head node power measurement is required in real-time (on the order of 100 ms). The VB power and SoC estimates are provided by the VB-DER interface. \textcolor{black}{While every other network communication can be accomplished through SCADA, broadcasting head-node power measurements in real-time is not possible through SCADA, which operates on the order of 2-5 seconds. However, real-time automation controllers (e.g., SEL RTAC~\cite{SEL_RTAC2020}) are sufficient and can process distribution substation power measurements and broadcast the tracking error to the feeder's VBs via the DSO's Gigabit Ethernet or fiber-optic networks on a timescale of about 100 ms~\cite{llnl2019derms}.}

\section{Convex OPF formulation for reference tracking}\label{sec:convex_form}

\subsection{Mathematical Notation}\label{sec:math_not}
\textcolor{black}{Consider a radial, balanced distribution network as a graph $\mathcal{G}=\{\mathcal{N},\mathcal{E}\}$, where $\mathcal{N}$
is the set of nodes and $\mathcal{E}$ is the set of
branches, such that $(i,j)\in \mathcal{E}$, if nodes $i,j\in \mathcal{N}$ are connected, and  $|\mathcal{E}|=n, \ |\mathcal{N}|=n+1$. Node $0$ is assumed to be the head-node (i.e., substation) node with a fixed voltage $V_0$ and define $\mathcal{N^+}:=\mathcal{N}\setminus \{0\}$. 
Let $V_i$ be the voltage phasor at node $i$ and $I_{ij}$ the current phasor in branch $(i,j)\in \mathcal{E}$. Then, we define $v_i:=|V_i|^2$ and $l_{ij}:=|I_{ij}|^2$. Let $S_{ij}=P_{ij}+\mathbf{j}Q_{ij}$ denote the sending end power flow from bus $i$ to bus $j$ where $P_{ij}$ and $Q_{ij}$ denote the active and reactive power flows respectively and let $s_i=p_i+\mathbf{j}q_i$ denote the power injection into bus $i$ where $p_i$ and $q_i$ denote the active and reactive power injections, respectively. Next, denote $r_{ij}$ and $x_{ij}$ as the resistance and reactance of the branch $(i,j)\in \mathcal{E}$, respectively, which gives complex branch impedance $z_{ij}=r_{ij}+\mathbf{j}x_{ij}$.}

\textcolor{black}{Then, based on the \textit{DistFlow} model for radial networks~\cite{baran1989optimal}, the variables $(s,S,v,l,s_0)$ at any time-step $k$ %with time step $\Delta t$ 
are described by the following equations:}
  
    \begin{align}
        S_{ij}[k]=&s_i[k]+\sum_{h:h \rightarrow i}(S_{hi}[k]-z_{hi}l_{hi}[k]), \quad \forall (i,j)\in \mathcal{E}\label{eq:1}\\
        0=&s_0[k]+\sum_{h:h \rightarrow 0}(S_{h0}[k]-z_{h0}l_{h0}[k])\label{eq:2}\\
        v_i[k]-v_j[k]=&2\textrm{Re}(\overline{z}_{ij}S_{ij}[k])-|z_{ij}|^2l_{ij}[k], \quad  \forall (i,j) \in \mathcal{E}\label{eq:3}\\
        l_{ij}[k]=&\frac{|S_{ij}[k]|^2}{v_i[k]}, \quad \forall (i,j) \in \mathcal{E}\label{eq:4}
    \end{align}
    \textcolor{black}{Apart from the nonlinear relation \eqref{eq:4} of $l$ to $S$ and $v$,~\eqref{eq:1}-\eqref{eq:3} represent a linear relationship between the nodal power injections $s$, the branch power flows $S$, and the nodal voltages $v$. 
Thus, in an AC OPF optimization formulation,~\eqref{eq:4} would be a non-convex equality constraint, which begets a non-convex formulation. To overcome the non-convex formulation, we present an approach in Section~II-C that accounts for the nonlinearity introduced by~\eqref{eq:4} via a second-order cone relaxation~\cite{low2014part1,low2014part2}. However, first, we introduce the VB concept and its corresponding dynamic model, which augments the AC network constraints in the OPF formulation.}

\subsection{Virtual battery model}\label{sec:VB_model}
\textcolor{black}{In this paper, a VB is considered an energy-based model of a dispatchable aggregation of a relatively small number of controllable DERs (e.g., 100-200 distributed loads like ACs, which thus have the capacity to both consume and produce energy). This representation of DERs as an aggregate VB is adequate, based on several works in the literature (e.g., \cite{hao2014aggregate,hughes2016identification,nandanoori2019identification}). Moreover, while doing this aggregation, it is important to consider the human-in-the-loop in these flexible demand resources, which manifests itself in the form of State-of-Charge (SoC) of VBs~\cite{kane2019data}, based on Quality of Service (QoS) constraints (e.g., temperature/comfort limits).} It is assumed that the DERs reside in the low-voltage secondary network, while the local VB coordination (computation and control) takes place at a nearby primary feeder node.  Owing to the small scale of aggregation, the DERs that make up a VB are controlled directly via  utility Gigabit ethernet (e.g., IEC 61850) or wireless cellular/Wi-Fi connection. From literature, it has been shown that a VB endowed with a DER control policy, such as a priority-based switching of DERs~\cite{nandanoori2018prioritized},  can be well-described by a first-order dynamic model of the VB's energy state, (e.g., see \cite{hao2014aggregate, hughes2016identification, chakraborty2018virtual}):
\begin{align}
   \dot{B_i}(t)&=-\alpha_{\text{b},i}B_i(t)-p_{\text{b},i}(t)\label{eq:batt_update}\\
    \tau_i \dot{p}_{\text{b},i}(t)&= - p_{\text{b},i}(t) +  p_{\text{in},i}(t-T_{d,i})\label{eq:batt_1order}\\
   \underline{B_i}&\le B_i(t)\le \overline{B_i}\label{eq:batt_Blimit}\\
   \underline{p_{\text{b},i}}&\le p_{\text{in},i}(t)\le \overline{p_{\text{b},i}}\label{eq:batt_plimit},
\end{align}
where at node $i\in \mathcal{N^+}$, $B_i(t)$ is the VB's state of charge (SoC), $p_{\text{b},i}(t)$ its active power output, and $p_{\text{in},i}(t)$ its desired total active power. The upper (lower) bound of the SoC is given by $\overline{B_i}$ ($\underline{B_i}$) while the VB's upper (lower) power limits are denoted $\overline{p_{\text{b},i}}$ ($\underline{p_{\text{b},i}}$). Note that~\eqref{eq:batt_update} captures the SoC dynamics with $\alpha_{\text{b},i}$ as the energy dissipation rate. The coupling between the VB's ability to change output power and the VB's control and communication of DERs and their response is generalized with~\eqref{eq:batt_1order} as a  lag-and-delay model. In that model, $T_{\text{d},i}$ is the time delay associated with communicating with the $i$th VB and $\tau_i$ is the time constant of the first-order model (similar to \cite{samivb} for example). Specifically,~\eqref{eq:batt_1order} was formed by taking note of the following facts: $i$) The DERs that compose a VB turn on/off (possibly) sequentially, and there are power electronic components present inside each VB, both of which contribute to a net lag $\tau_i$; $ii$) \textcolor{black}{There are communication delays (generally of the order of 200 ms) between the head node of the feeder and each VB ~\cite{naduvathuparambil2002communication,amini2016investigating}, and delays associated with disaggregating the control signal into device-level signals \cite{nandanoori2018prioritized}. The delays we consider in the VB model are in fact both these types of delays lumped together. The effect of time delays on stability is investigated in Section III-D.}
For this work, the battery charge and discharge efficiencies are assumed to be unity. Inclusion of non-unity battery efficiency requires binary variables to avoid simultaneous charging and discharging. A detailed description of the battery model with non-unity efficiency and analysis on avoiding simultaneous charging and discharging is provided in~\cite{almassalkhi2015model, nazir_optimal3phase}.
\textcolor{black}{This full VB model described in \eqref{eq:batt_update}-\eqref{eq:batt_plimit} is employed in simulations involving the real-time, corrective control scheme detailed in Section~III. However, for the optimal dispatch to be presented next, since we optimize the VB set-points on a minutely timescale and $\alpha_{\text{b},i}\approx 0$ in realistic settings, it is reasonable to consider a simplified, discretized VB model for set-point optimization.} This reduced, predictive model is below:
\begin{align}
    B_i[k+1] =&  B_i[k]-\Delta t p_{\text{b},i}[k] \label{eq:B_1}\\
    \underline{B_i}\le  B_i[k]\le \overline{B_i}, \quad & %\label{eq:B_2}
    \underline{p_{\text{b},i}}\le p_{\text{b},i}[k]\le \overline{p_{\text{b},i}}\label{eq:B_3}
\end{align}
valid for any $k$ and where $\Delta t$ is the discretization timestep.
Next, we augment the discrete-time VB model with the AC network model to formulate the feeder head-node reference power tracking OPF problem.
\subsection{Conventional convex formulation}\label{sec:conv_form}
\textcolor{black}{  Convex relaxation techniques have gained popularity recently due to the existence of global optimality guarantees for AC OPF problems~\cite{lavaei2012zero, gan2015exact}.
In this subsection, we present a traditional second-order cone programming (SOCP) relaxation of the AC power flow equations to formulate a convex, multi-period reference tracking OPF problem, (P1). Decoupling the cost function in the form: $\sum_{i\in \mathcal{N}}f_i(p_i[k])=(p_0[k]-p_0^{\text{econ}}[k])^2+(p_{\text{VB}}^{\text{econ}}[k]-\sum_{i\in \mathcal{N^+}}p_{\text{b},i}[k])^2$, the optimization problem can be expressed as:}

\begin{subequations}\label{eq:P1}
\begin{align}\label{eq:P1_obj}
\text{(P1)}\qquad \ \min_{p_{\text{b},i}[k]} & \sum_{i\in \mathcal{N}}f_i(p_i[k])\\
%\textrm{over} & \quad s,S,v,l,s_0\\
\text{s.t.}: & \phantom{m} \eqref{eq:1}-\eqref{eq:3},~ \eqref{eq:B_1}-\eqref{eq:B_3}\label{eq:P1_b}\\
p_i[k]&= p_{\text{b},i}[k]-P_{\text{L},i}[k]+P_{\text{S},i}[k], \quad \forall i \in \mathcal{N^+}\label{eq:P1_real_bal}\\
q_i[k]& =-Q_{\text{L},i}[k], \quad \forall i \in \mathcal{N^+}\label{eq:P1_reac_bal}\\
l_{ij}[k]& \ge \frac{|S_{ij}[k]|^2}{v_i[k]}, \quad \forall (i,j)\in \mathcal{E} \label{eq:5}\\
s_i[k]& \in \mathcal{S}_i, \quad i\in \mathcal{N^+}\label{eq:6}\\
\underline{v_i} & \le v_i[k]\le \overline{v_i}, \quad i\in \mathcal{N^+}\label{eq:7}
\end{align}
\end{subequations}
\textcolor{black}{for discrete time-step $k\in \mathcal{T}$ over a prediction horizon $\mathcal{T}:=\{0, \hdots , T-1\}$ and where the power injection $s_i$ at a node $i\in \mathcal{N^+}$ is constrained to be in a pre-specified, compact, convex set $\mathcal{S}_i\in \mathbb{C}$. The prediction horizon length $T$ is chosen subject to availability of forecasts and depending upon the dynamics of VBs~\cite{hao2014aggregate}. In our formulation, we choose $T=10$. The set $\mathcal{S}_i\in \mathbb{C}$ depends upon the VB constraints, e.g., in case of inverters this set is given by: $\{(p_i,q_i)|p_i^2+q_i^2\le \overline{S_i^2}\}$, where $\overline{S_i}$ is the apparent power limit of the inverter. In~\eqref{eq:P1}, the cost function~\eqref{eq:P1_obj} minimizes the deviation of the head-node power $p_0[k] \in \mathbb{R}$ from the economic head-node reference trajectory $p_0^{\textrm{econ}}[k] \in \mathbb{R}$, which is composed of: $i$)  desired economic aggregate VB dispatch, $p_{\text{VB}}^\text{econ}[k] \in \mathbb{R}$; $ii$) total predicted losses, $L[k] := \sum_{(i,j)\in \mathcal{E}} r_{ij} l_{ij}[k]$; and $iii$) total forecasted net-demand, $\sum_{i\in \mathcal{N^+}} (P_{\text{L},i}[k]-P_{\text{S},i}[k])$, where $P_{\textrm{L},i}[k]\in \mathbb{R}$ is the active power demand at node $i\in \mathcal{N^+}$, and $P_{\text{S},i}[k] \in \mathbb{R}_+$ is the solar PV generation at node $i\in \mathcal{N^+}$. We are optimizing over the VB dispatch, $p_{\text{b},i}[k] \in \mathbb{R} \quad \forall i \in \mathcal{N^+}$, which appears in power balance constraint~\eqref{eq:P1_real_bal} while reactive power demand, $Q_{\textrm{L},i}[k]\in \mathbb{R}$, is used in~\eqref{eq:P1_reac_bal}. The second-order cone relaxation of the nonlinear equation \eqref{eq:4} is given in \eqref{eq:5}; \eqref{eq:6} and \eqref{eq:7} provide constraints on power injection and voltage magnitudes.}

\textcolor{black}{Several works in literature such as~\cite{gan2015exact}
provide conditions under which the second-order cone relaxation is exact for distribution networks.
If an optimal solution of (P1) $w^*=(s^*,S^*,v^*,l^*,s_0^*)$ is feasible for OPF, i.e., $w^*$ satisfies \eqref{eq:4}, then $w^*$ is global optimum of OPF and (P1) is said to be exact. Theorem 1 in~\cite{gan2015exact} provides conditions for the SOCP problem in (P1) to be exact, however, it requires the part of cost function $f_0(p_0)=(p_0-p_0^{\text{econ}})^2$ to be strictly increasing, which is not the case when tracking a reference power signal. Thus, even under the conditions provided in~\cite{gan2015exact}, (P1) may not be exact. 
    To overcome these shortcomings of (P1), we propose a novel method for convexifying the AC OPF while ensuring an exact solution at optimality. Specifically, we utilize a linearized approximation of line losses and through this obtain a cost function that is strictly increasing in $p_0$ in order to satisfy the conditions in~\cite{gan2015exact}.}
\subsection{Reformulated convex formulation}\label{sec:rcf}
\textcolor{black}{To reformulate (P1), we consider each piece of the two-part composition of feeder's predicted head-node power, $p_0[k]\approx -\sum_{i\in \mathcal{N^+}} p_i[k]+L_1[k]$. Specifically, we employ a first-order approximation of total predicted line losses, $L_1[k]$, in the cost function (via $p_i$-to-loss sensitivity factors, which we denote by $\zeta_i$), and prove that with approximated losses, the solution is tight at optimality. Thus, the predicted grid response to an optimized VB dispatch is AC feasible. 
To achieve the above, we use $p_{\text{VB}}^{\text{econ}}$ and $p_0^{\text{econ}}$ mentioned earlier. This leads to a multi-objective reference tracking problem, similar to the form of a linear quadratic regulator (LQR) from optimal control~\cite{kwakernaak1972linear}:} 
\begin{subequations}\label{eq:P3}
\begin{align}
\textrm{(P2)}\quad \min_{p_{\text{b},i}[k]} & \sum_{k=1}^T f_{\text{HN}}[k]^2+\alpha f_{\text{VB}}[k]^2+ \epsilon p_0[k]\label{eq:P2_obj}\\
\textrm{subject to:} &\nonumber\\
f_{\text{HN}}[k]&= \sum_{i\in \mathcal{N^+}} (p_{\text{b},i}[k]+P_{\text{S},i}[k]-P_{\text{L},i}[k])-L_1[k]- p_0^\text{econ}[k]\label{eq:P2_HN}\\
f_{\text{VB}}[k] &=p_{\text{VB}}^{\text{econ}}[k]-  \sum_{i\in \mathcal{N^+}} p_{\text{b},i}[k]\label{eq:P2_VB}\\
L_1[k]&=L_{0,k}+\sum_{i \in \mathcal{N^+}}\zeta_i\Delta p_i[k]\label{eq:P2_loss}\\
&\textrm{\eqref{eq:1}-\eqref{eq:3}, \eqref{eq:B_1}, \eqref{eq:B_3}, \text{ and }\eqref{eq:P1_real_bal}-\eqref{eq:7}}\label{eq:P2_final},
\end{align}
\end{subequations}
\textcolor{black}{for discrete time-step $k\in \mathcal{T}$ over a prediction horizon $\mathcal{T}:=\{0, \hdots , T-1\}$. The parameters $\alpha$,
$\epsilon$ $\in \mathbb{R}_+$ are chosen appropriately with $\epsilon\ll 1$; $L_1[k] \in \mathbb{R}$ is the first-order estimate of the total feeder line losses at time-step $k$ and $L_{0,k} \in \mathbb{R}_+$ is the loss estimated for the operating point at time $k$.
The term $\epsilon p_0$ results in a tight relaxation as will be shown next in Theorem \ref{thm:Th1}, whereas the term $\zeta_i\Delta p_i[k]=\zeta_i(p_i[k]-p_i[0])$ represents the change in network loss due to change in active power injection at node $i \in \mathcal{N^+}$, with $p_i[0]$ being the nominal injection.
The factors $\zeta_i \in \mathbb{R}$ provide the first-order change in feeder losses due to changes in VB power injections. Similar \textit{power transfer distribution factors} are often used in transmission system analysis but have recently been adapted for distribution networks~\cite{Christakousensitivity}.} 

\begin{remark}
The formulation in (P2) can easily be extended to account for solar curtailment as a control variable resulting in a more general formulation. If $P_{\text{C}}^{\text{econ}} \in \mathbb{R}_+$ represents the curtailment reference trajectory and $P_{\text{C},i} \in \mathbb{R}_+$ represents the solar curtailment at node $i\in \mathcal{N^+}$, then $\sum_{i\in \mathcal{N^+}}(P_{\text{S},i}-P_{\text{C},i})$ is the net solar output and $\sum_{i\in \mathcal{N^+}} P_{\text{C},i}-P_{\text{C}}^{\text{econ}}$ is the error in tracking the curtailment trajectory.
\end{remark}
In the next section, Theorem~\ref{thm:Th1} proves that under practical conditions, the (P2) has a zero duality gap.
\subsection{Exactness of formulation (P2)}

\textcolor{black}{In order to explain the notation in Theorem~\ref{thm:Th1}, consider $\mathcal{L}:=\{l\in \mathcal{N}|  \not \exists k\in \mathcal{N}\  \text{such that}\  k \rightarrow l\}$, which denotes the collection of leaf nodes in the network. For a leaf node $l\in \mathcal{L}$, let $n_l+1$ denote the number of nodes on path $\mathcal{P}_l$, and suppose}
    \begin{align}
       \mathcal{P}_l=\{l_{n_l}\rightarrow l_{n_{l-1}}\rightarrow \hdots l_1\rightarrow l_0\} \nonumber
    \end{align}
   \textcolor{black}{ with $l_{n_l}=l$ and $l_0=0$.} 
    
    \textcolor{black}{Also, define $a^+:=\mathrm{max}\{a,0\}$ for $a\in \mathbb{R}$ and let $I_2$ denote the $2\times 2$ identity matrix, and define vectors $u_i := \text{col}\{r_{ij}, x_{ij}\}$ and matrices 
    $$\underline{A}_i:=I_2-\frac{2}{\underline{v_i}}\begin{bmatrix}r_{ij}\\x_{ij}\end{bmatrix}\begin{bmatrix}\hat{P}_{ij}^+(\overline{p})\ \ \hat{Q}_{ij}^+(\overline{q})\end{bmatrix}$$ for $(i,j)\in \mathcal{E}$
    where $\hat{P}_{ij}^+(\overline{p})$ and $\hat{Q}_{ij}^+(\overline{q})$ are upper bounds on $P_{ij}$ and $Q_{ij}$ and are chosen so that $\underline{A}_i$ only depends on the SOCP parameters $(r,x,\overline{p},\overline{q},\underline{v})$.}
   
    \textcolor{black}{Furthermore, let $(\hat{S},\hat{v})$ denote the solution of the Linear DistFlow model, then}
    \begin{align}
        \hat{S}_{ij}(s)=&\sum_{h:i\in \mathcal{P}_h}s_h, \quad \forall (i,j)\in \mathcal{E}\\
        \hat{v_i}(s):=&v_0+2\sum_{(j,k)\in \mathcal{P}_i}\mathrm{Re}(\overline{z}_{jk}\hat{S}_{jk}(s)), \quad \forall i\in \mathcal{N^+}
    \end{align}
   \textcolor{black}{where $\mathcal{P}_i$ denotes the unique path from node $i$ to node 0. Since the network is radial, the path $\mathcal{P}_i$ exists and is unique. Physically, $\hat{S}_{ij}(s)$ denotes the sum of power injections $s_h$ towards node 0 that go through line $(i,j)$. Note that $(\hat{S}(s),\hat{v}(s))$ is affine in $s$, and equals $(S,v)$ if and only if line loss $z_{ij}l_{ij}$ is 0 for $(i,j)\in \mathcal{E}$. Then based on the DistFlow model define:}
    \begin{align}
        \mathcal{S}_{\text{volt}}:=\{s\in \mathbb{C}^n|\hat{v_i}(s)\le \overline{v_i}\quad \forall i\in \mathcal{N^+}\}
    \end{align}
    \textcolor{black}{which denotes the power injection region where $\hat{v}(s)$ is upper bounded by $\overline{v}$. Since $v(s)\le \hat{v}(s)$ (Lemma 1 in~\cite{gan2015exact}), the set $\mathcal{S}_{\text{volt}}$ is a power injection region where voltage upper bounds do not bind. Then based on this notation, Theorem~\ref{thm:Th1} below proves the exactness of (P2).}

\textcolor{black}{
\begin{theorem}\label{thm:Th1}
  The SOCP problem (P2) is exact if the C1 and C2 conditions given in Theorem~1 of~\cite{gan2015exact} are satisfied:
    \begin{itemize}
        \item[\textbf{C1}:]  $\underline{A}_{l_s}\underline{A}_{l_{s+1}}\hdots \underline{A}_{l_{t-1}}u_{l_t}>0$ for any $l\in \mathcal{L}$ and any $s,t$ such that $1\le s\le t\le n_l$;
        \item[\textbf{C2}:]  Every optimal solution $w^*=(s^*,S^*,v^*,l^*,s_0^*)$ satisfies $s^*\in \mathcal{S}_{\text{volt}}$
    \end{itemize}
\end{theorem}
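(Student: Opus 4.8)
The plan is to recognize (P2) as a special case of the SOCP-relaxed radial OPF analyzed in~\cite{gan2015exact} and then apply its Theorem~1, so that the remaining work is to check the hypotheses of that theorem that go beyond C1 and C2. Indeed, the network part of (P2) --- the DistFlow equations~\eqref{eq:1}--\eqref{eq:3}, the second-order cone inequality~\eqref{eq:5} relaxing~\eqref{eq:4}, and the per-node convex constraints~\eqref{eq:6}--\eqref{eq:7} --- is exactly the feasible region to which~\cite{gan2015exact} applies. The only additional elements are the reduced VB model~\eqref{eq:B_1}, \eqref{eq:B_3} and the composite objective~\eqref{eq:P2_obj}. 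The VB constraints are purely local: they couple the decision variables $p_{\text{b},i}[k]$ only across time and only within a single node $i\in\mathcal N^+$, so over the horizon $\mathcal T$ they merely replace each nodal injection admissible set by a smaller compact convex one and do not disturb the radial network structure exploited in~\cite{gan2015exact}.

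It then remains to verify the objective-side hypotheses of~\cite[Thm.~1]{gan2015exact}: convexity, and the strict-monotonicity property that forces the relaxation to be tight. By~\eqref{eq:P2_loss} the surrogate loss $L_1[k]$ is affine in the injections, so $f_{\text{HN}}[k]$ and $f_{\text{VB}}[k]$ of~\eqref{eq:P2_HN}--\eqref{eq:P2_VB} are affine and their squares are convex. For the last term, summing the DistFlow equations~\eqref{eq:1}--\eqref{eq:2} over the radial tree yields the exact total-power-balance identity $p_0[k]=\sum_{(i,j)\in\mathcal E} r_{ij}\,l_{ij}[k]-\sum_{i\in\mathcal N^+} p_i[k]=L[k]-\sum_{i\in\mathcal N^+}p_i[k]$, which together with~\eqref{eq:P1_real_bal} is affine in $\bigl(l_{ij}[k],\,p_{\text{b},i}[k]\bigr)$; hence the objective~\eqref{eq:P2_obj} is convex. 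The same identity shows that the objective depends on the branch-current variables $l_{ij}[k]$ \emph{only} through the term $\epsilon p_0[k]$, with $\partial(\text{obj})/\partial l_{ij}[k]=\epsilon\,r_{ij}>0$ for every $(i,j)\in\mathcal E$ and every $k$ (using $\epsilon>0$ and $r_{ij}>0$). Thus the objective is strictly increasing in each $l_{ij}[k]$, which is exactly the ``strictly increasing substation cost'' hypothesis of~\cite{gan2015exact} --- and it is precisely what the $\epsilon p_0$ term was introduced to secure. By contrast, in (P1) the head-node penalty $(p_0-p_0^{\text{econ}})^2$ is a parabola in $p_0$, hence a non-monotone function of the losses, so the property fails there; replacing it by $f_{\text{HN}}[k]^2$ --- built from $L_1[k]$, which contains no $l$ --- together with the linear term $\epsilon p_0[k]$ restores strict monotonicity in $l$ without sacrificing convexity.

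Granting C1 and C2, the convexity and strict $l$-monotonicity just established supply the remaining hypotheses of~\cite[Thm.~1]{gan2015exact}, which then gives that every optimal $w^*=(s^*,S^*,v^*,l^*,s_0^*)$ of (P2) satisfies~\eqref{eq:4}, i.e., the cone constraints~\eqref{eq:5} bind at optimality and (P2) is exact. I expect the one step that needs care --- the only genuinely new point relative to~\cite{gan2015exact} --- to be that Theorem~1 there is stated for a \emph{separable} cost $\sum_i f_i(p_i)$, whereas~\eqref{eq:P2_obj} carries cross-terms (the square of a sum over nodes). To handle this I would re-examine the exactness argument of~\cite{gan2015exact}: assuming toward a contradiction a non-exact optimum with $l_{ij}^*[k]>|S_{ij}^*[k]|^2/v_i^*[k]$ on some line and time, one perturbs it --- exactly as in that proof --- by lowering the offending branch current while holding every controllable injection $s_i$, $i\in\mathcal N^+$, fixed, with C1 keeping the perturbed point in the feasible current/voltage region and C2 keeping the upper voltage bounds slack. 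Because the controllable injections are unchanged, $f_{\text{HN}}[k]$, $f_{\text{VB}}[k]$ and the VB constraints~\eqref{eq:B_1}, \eqref{eq:B_3} are all unchanged, whereas $p_0[k]$ strictly decreases (the loss $L[k]$ drops), so the objective strictly decreases --- contradicting optimality. This shows the cross-terms are immaterial, since the perturbation freezes exactly the quantities on which they depend, and the conclusion of~\cite{gan2015exact} transfers unchanged.
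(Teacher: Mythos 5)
Your proposal is correct and follows essentially the same route as the paper: both reduce (P2) to a direct application of Theorem~1 of~\cite{gan2015exact}, with the linear term $\epsilon p_0$ supplying the required strict monotonicity that the pure tracking cost $(p_0-p_0^{\text{econ}})^2$ lacks. The one place you go beyond the paper's argument is the separability issue: the paper simply declares the decomposition $f_0(p_0)=\epsilon p_0$, $f_i(p_i)=f_{\text{HN}}^2+\alpha f_{\text{VB}}^2$ and invokes the cited theorem, whereas you observe that $f_{\text{HN}}^2$ and $f_{\text{VB}}^2$ are squares of sums over nodes (hence not of the separable form $\sum_i f_i(p_i)$ assumed in~\cite{gan2015exact}) and justify why this does not matter --- the exactness perturbation holds all controllable injections fixed and only lowers branch currents, so the cross-terms are frozen and only $\epsilon p_0$ moves. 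That extra verification is a genuine tightening of the paper's one-line proof rather than a different approach.
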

\begin{proof}
 The cost function of the optimization problem (P2) can be expressed as:
    \begin{align}
        f_0(p_0)=&\epsilon p_0\\
        f_i(p_i)=&f_{\text{HN}}^2+\alpha f_{\text{VB}}^2 \qquad \forall i\in \mathcal{N^+}
    \end{align}
    As $f_0$ in the cost function in~\eqref{eq:P2_obj} is strictly increasing, the SOCP formulation satisfies all the conditions provided in Theorem 1 of~\cite{gan2015exact} and hence the proof is a direct application of Theorem 1 in~\cite{gan2015exact} under conditions C1 and C2. This concludes the proof. 
\end{proof}
}
\textcolor{black}{The term $\epsilon p_0$ is added to satisfy the additional condition in Theorem 1 of \cite{gan2015exact}, where the cost function must be increasing with respect to $p_0$. Note that the inclusion of this term in the cost function affects the optimal solution. However, $\epsilon > 0$ can now be made arbitrarily small (per the proof of Theorem~\ref{thm:Th1}), which ensures that the impact on the optimal solution is negligible.}

 \textcolor{black}{C1 can be checked apriori and efficiently since $\underline{A}$ and $u$ are simple function of $(r,x,\overline{p},\overline{q},\underline{v})$ that can be computed in $\mathcal{O}(n)$ time and there are no more than $n(n+1)/2$ inequalities in C1.
    For practical parameters ranges of $(r,x,\overline{p},\overline{q},\underline{v})$, line resistance and reactance $r_{ij},x_{ij}<<1$ per unit for $(i,j)\in \mathcal{E}$, line flow $\hat{P}_{ij}(\overline{p}), \hat{Q}_{ij}(\overline{q})$ are on the order of 1 per unit for $(i,j)\in \mathcal{E}$ and voltage lower bound $\underline{v}_i\approx 1$ per unit for $i\in \mathcal{N^+}$. Hence, $\underline{A}_i$ is close to $I$ for $i\in \mathcal{N^+}$, and therefore C1 is likely to hold. As has been shown in~\cite{gan2015exact}, C1 holds for several test networks, including those with high penetration of renewables.}
    
     \textcolor{black}{To show the practical restriction of condition C2, Fig.~\ref{fig:C2_flex} shows the increase in $\hat{v}$ with increase in reverse power flow due to increased VB injections. From the figure, it can be seen that the condition is valid for VB injections up to more than 400\% of demand, compared to the base case of 20\%. It can also be seen from Fig.~\ref{fig:C2_flex} that $\hat{v}$ matches the actual voltage $v$ very closely due to the low impedance of the IEEE-37 node system resulting in small loss term $z_{ij}l_{ij}$. However, with solar PV penetration and an increase in impedance values, the maximum VB injection limit will reduce.}

     The next section presents simulation results that illustrate the effectiveness of (P2).
    
\begin{figure}
\centering
\includegraphics[width=0.7\columnwidth]{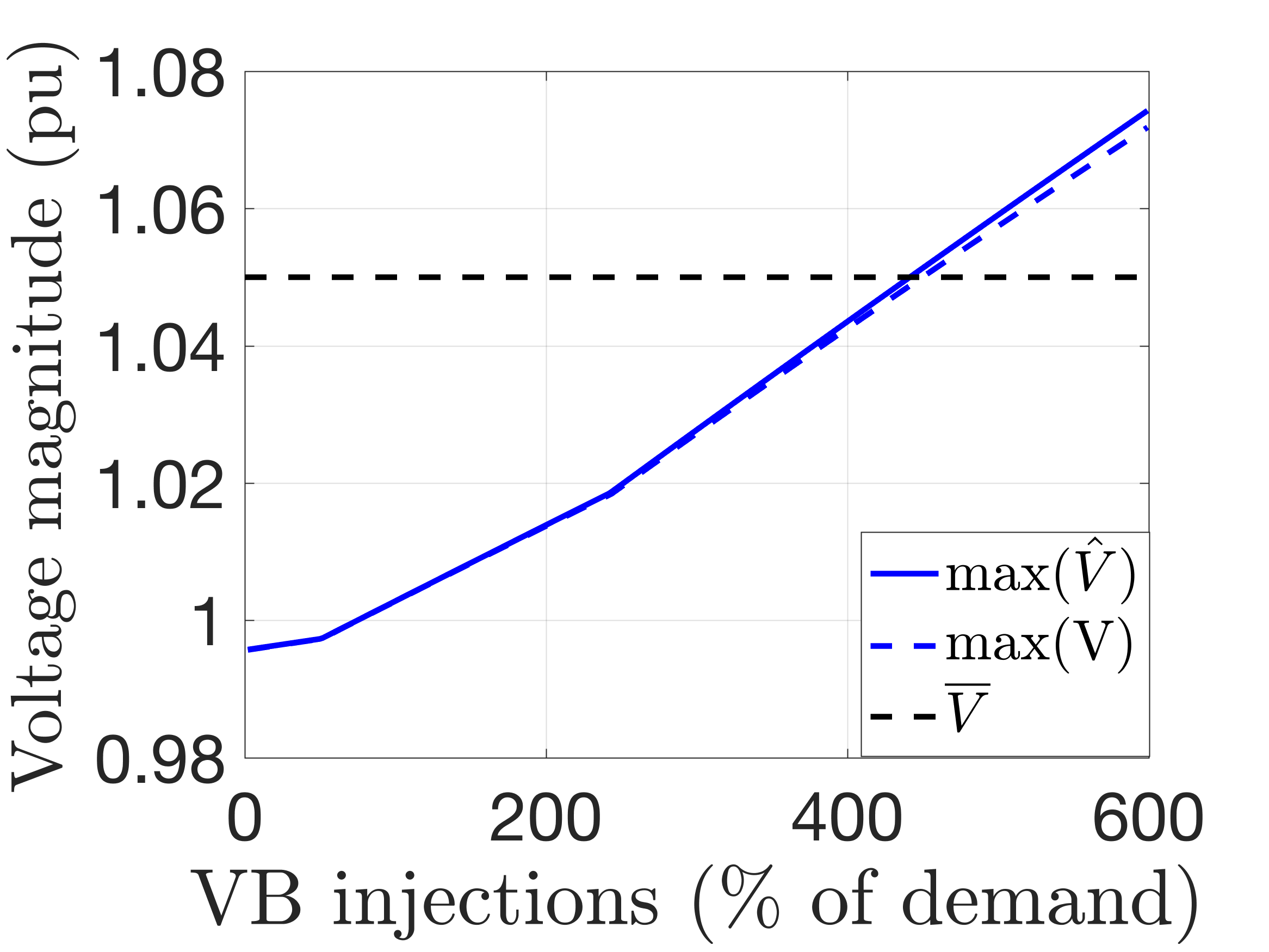}
\vspace{-10pt}
\caption{Condition C2 holds under large reverse power flow from VB injections up to around 400\% of demand.}
\label{fig:C2_flex}
\vspace{-15pt}
\end{figure}
    
    %In order to ensure that C2 holds, we can either include additional constraints to (P2) as done in~\cite{gan2015exact} to ensure that voltage upper bounds are never binding or we could project the solution onto the AC feasible set at a loss of optimality.
    
\subsection{Optimal VB dispatch and head node tracking simulation}\label{sim_results}
Simulation tests for the optimal reference tracking with VBs were conducted on the balanced version of the IEEE-37 node test feeder~\cite{kersting2001radial} to compare the conventional (P1) and the proposed (P2) formulations against the actual AC load flow from Matpower~\cite{zimmerman2011matpower}. Simulation results in Fig.~\ref{fig:head_tracking} show the reference tracking results in Matpower achieved through the flexibility of VBs using (P1) and (P2). The figure shows that (P2) can track the reference trajectory whereas (P1) cannot. For (P2), the error in tracking at each step change in the reference trajectory is due to the first-order loss approximation used in (P2). Since the loss approximation is updated every time-step, the effect on tracking error is small and corrected quickly as shown in Fig.~\ref{fig:head_tracking}, which means that (P2) represents a reference-tracking OPF formulation that can effectively dispatch VBs while guaranteeing network admissibility. On the other hand, for the convex (P1), the non-zero duality gap creates a mismatch between the predicted power flow values and the actual AC power flow, which results in the sub-par tracking illustrated in Fig.~\ref{fig:head_tracking}. Specifically, (P1) predicts perfect tracking, but the realized AC head node power does not match the grid reference, which results in suboptimal use of VB resources. The comparison of the aggregate State of Charge (SoC) obtained through (P1) and (P2) is shown in Fig.~\ref{fig:SoC_comp}. Clearly, (P1) predicts a different SoC trajectory than (P2) due to the non-physical solution of (P1). 
\begin{figure}[t]
    %\centering
    \vspace{-9pt}
  \subfloat [\label{fig:head_tracking}]{   \includegraphics[width=0.485\linewidth,trim={0 0 0 0.5cm},clip]{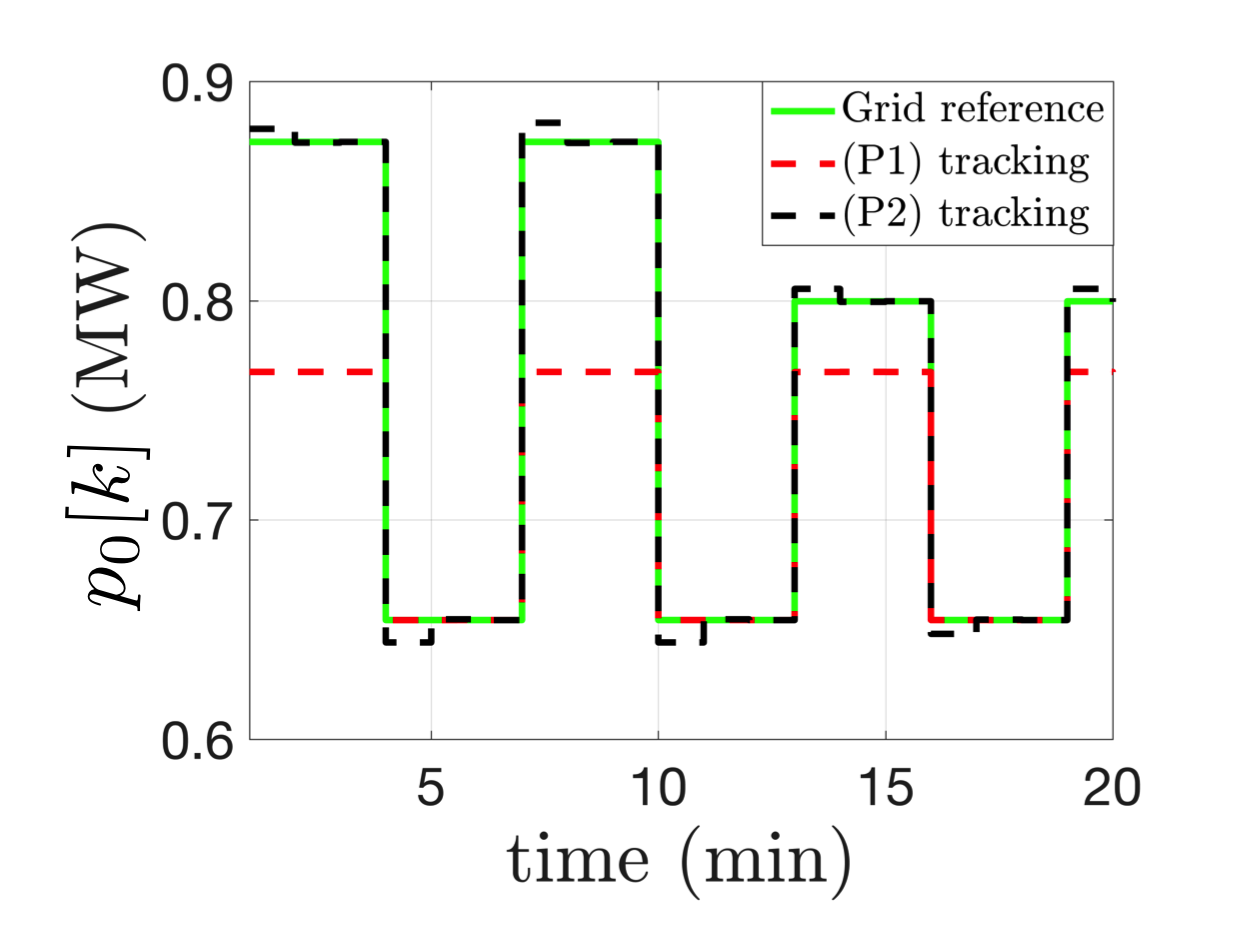}}
    \hfill
  \subfloat [\label{fig:SoC_comp}]{    \includegraphics[width=0.485\linewidth,trim={0 0 0 0.5cm},clip]{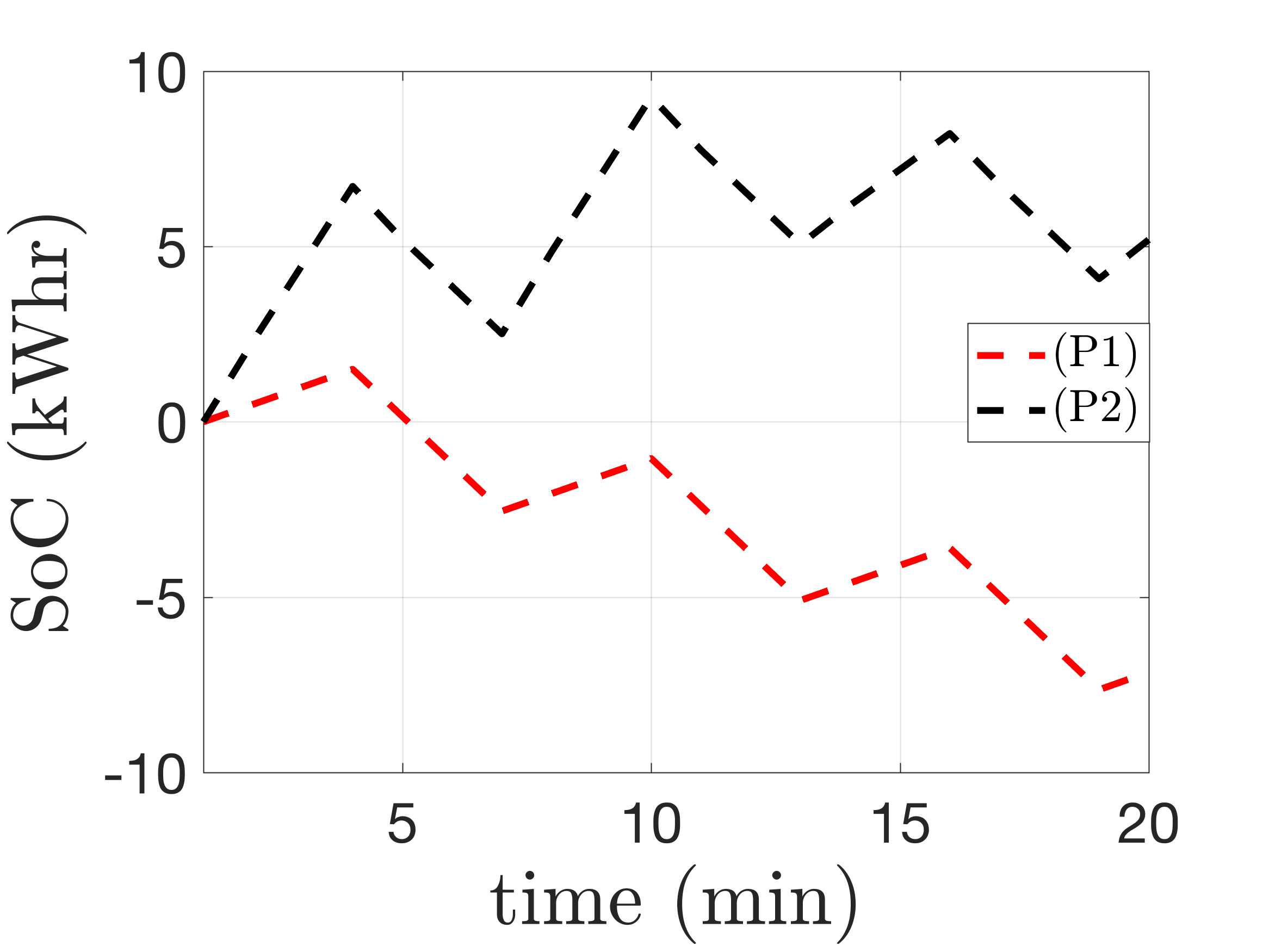} }
  \caption{Comparing the tracking performance of (\textit{P1}) vs (\textit{P2}); (b) Comparison of predicted SoC between (\textit{P1}) and  (\textit{P2}) over the prediction horizon.}
  \label{fig:tracking}
  \vspace{-10pt}
\end{figure}

\begin{remark}[High voltage conditions]
{\color{black} The voltage condition in Theorem~\ref{thm:Th1} (C2) is not restrictive for practical distribution networks as can be seen from Fig.~\ref{fig:C2_flex}. Importantly, condition (C2) can still be satisfied under large reverse power flows, which occurs in feeders with significant penetrations of batteries or solar~PV generation. To illustrate the effect of reverse power flows, we present simulation results in Fig.~\ref{fig:sim1}. For example, in Figs.~\ref{fig:power_feas} and~~\ref{fig:volt_feas}, it is shown that the predicted active head-node power matches the actual power while satisfying the voltage constraints. Of course, reverse power flows and high voltage conditions are related, which means that we are assuming that appropriate DER hosting capacity studies have been conducted to inform operations and avoid high voltage conditions. Nonetheless, theoretically, there are reverse power flows for which condition (C2) is violated at optimality, which means that the convex relaxation in (P2) may not be tight. Figs.~\ref{fig:power_infeas} and~\ref{fig:volt_infeas} illustrate the effects of a non-tight solution and show that the mismatch between the predicted and actual head-node power can lead to voltage violations due to predicted (relaxed, fictitious) losses that ensure a feasible solution in (P2), but are not realized in the physical feeder and, thus, reduce the head-node power further. To guarantee an exact solution that is always physically meaningful, we could include additional constraints to (P2) that capture condition (C2) implicitly (e.g., augment (P2) with the LinDist formulation's voltage variables, $\hat v_i$, and $\hat v_i$'s upper voltage bound), which ensures that $v< \bar{v_i}$, if (P2) is feasible~\cite{gan2015exact}. Alternatively, we could just project (P2)'s optimal solution onto the AC feasible set and accept the loss of optimality. }
%Under our formulation, we are able to show that only one of the conditions in~\cite{gan2015exact}, is required, namely that voltage upper bounds are not binding, which is much less restrictive and as has been shown in~\cite{gan2015exact}, can be enforced through additional constraints. 
%Thus, the key modifications in (P2), such as the linearized expression of network line losses in the objective function's tracking error. 
\end{remark}

\begin{figure}
\vspace{-12pt}
\centering
\subfloat[\label{fig:power_feas}]{\includegraphics[width=0.49\linewidth]{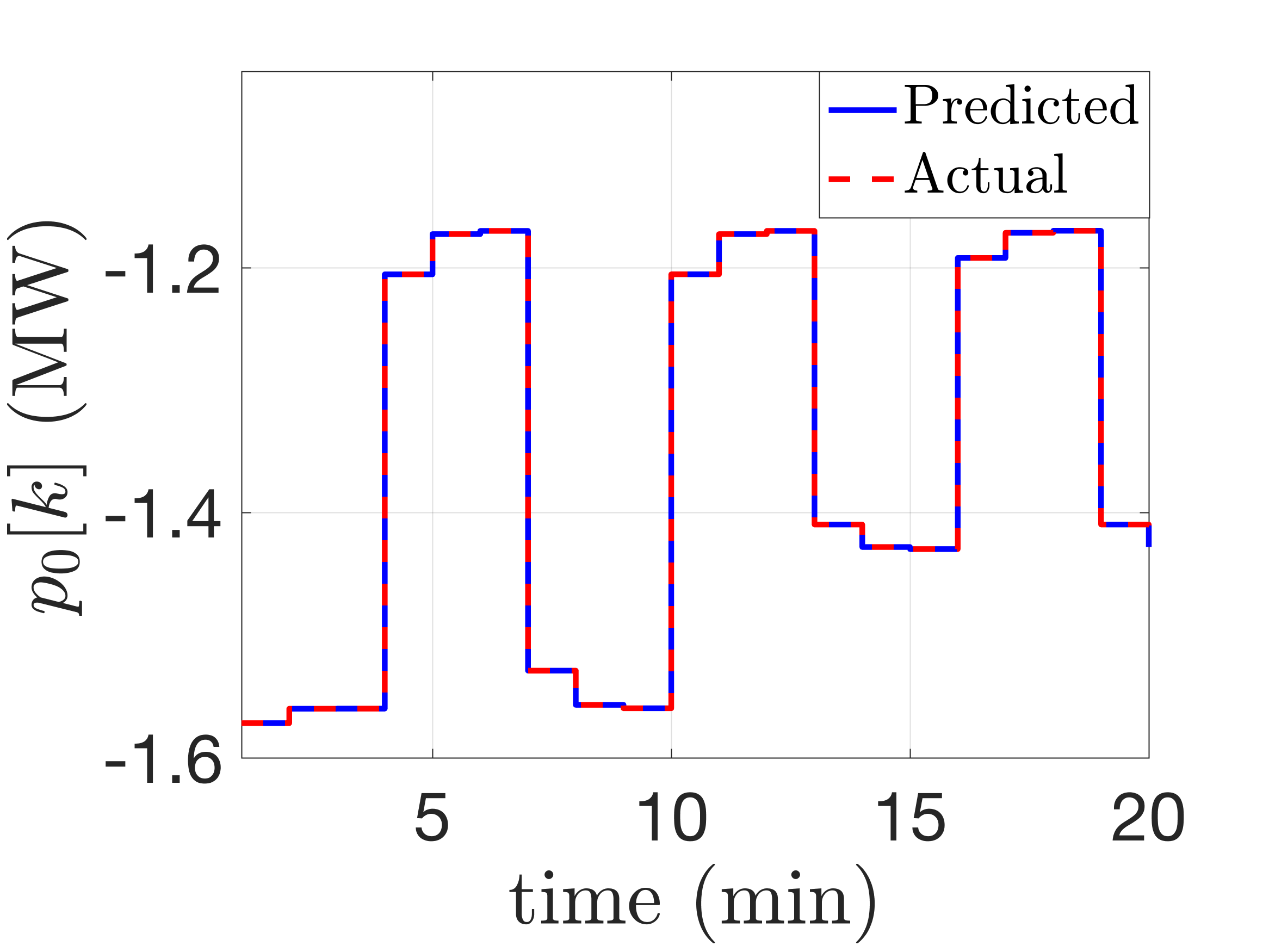}}
\hfill
\subfloat[\label{fig:volt_feas}]{\includegraphics[width=0.49\linewidth]{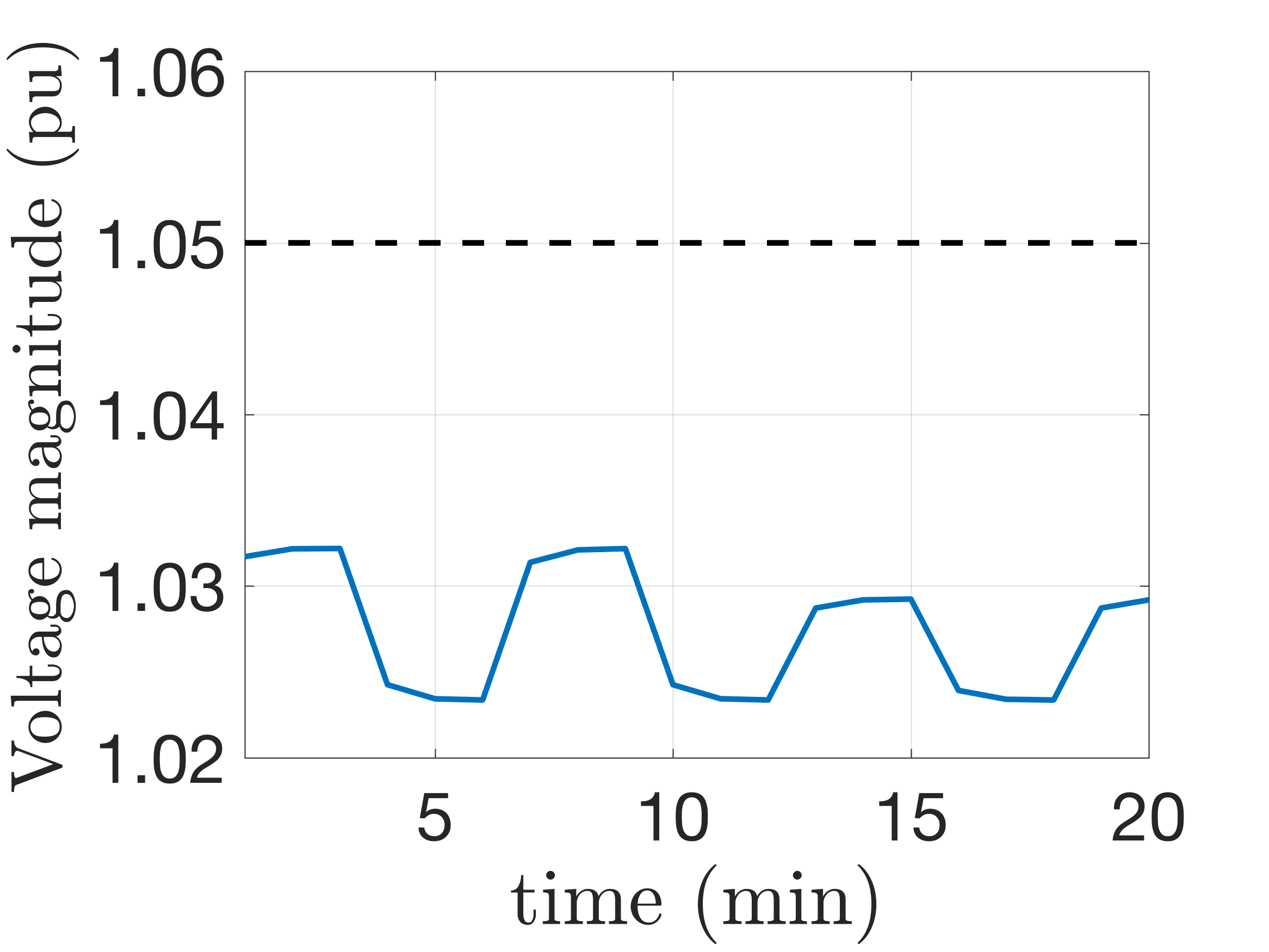}}
\\
\vspace{-12pt}
\subfloat[\label{fig:power_infeas}]{\includegraphics[width=0.49\linewidth]{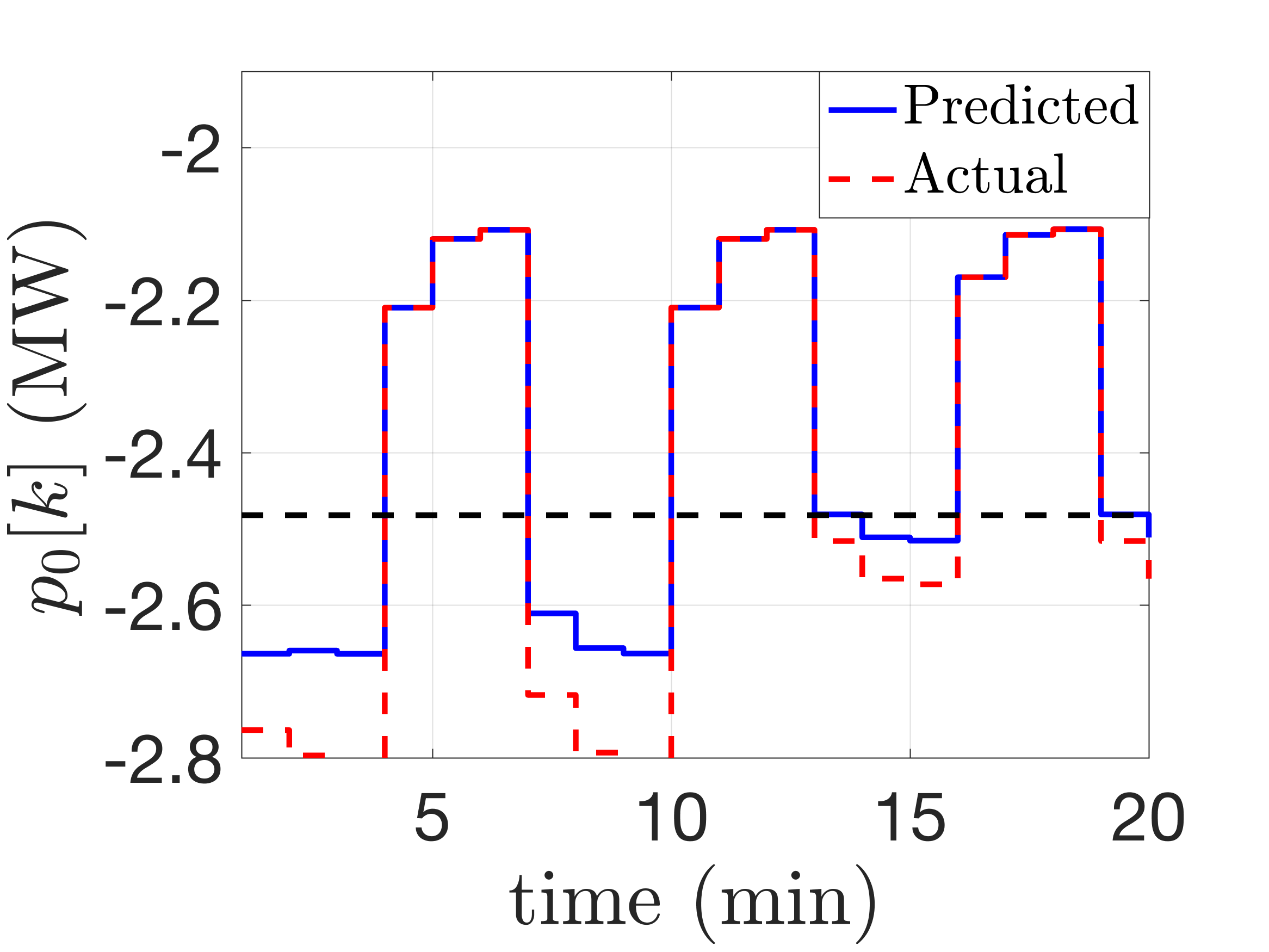}}
\hfill
\subfloat[\label{fig:volt_infeas}]{\includegraphics[width=0.49\linewidth]{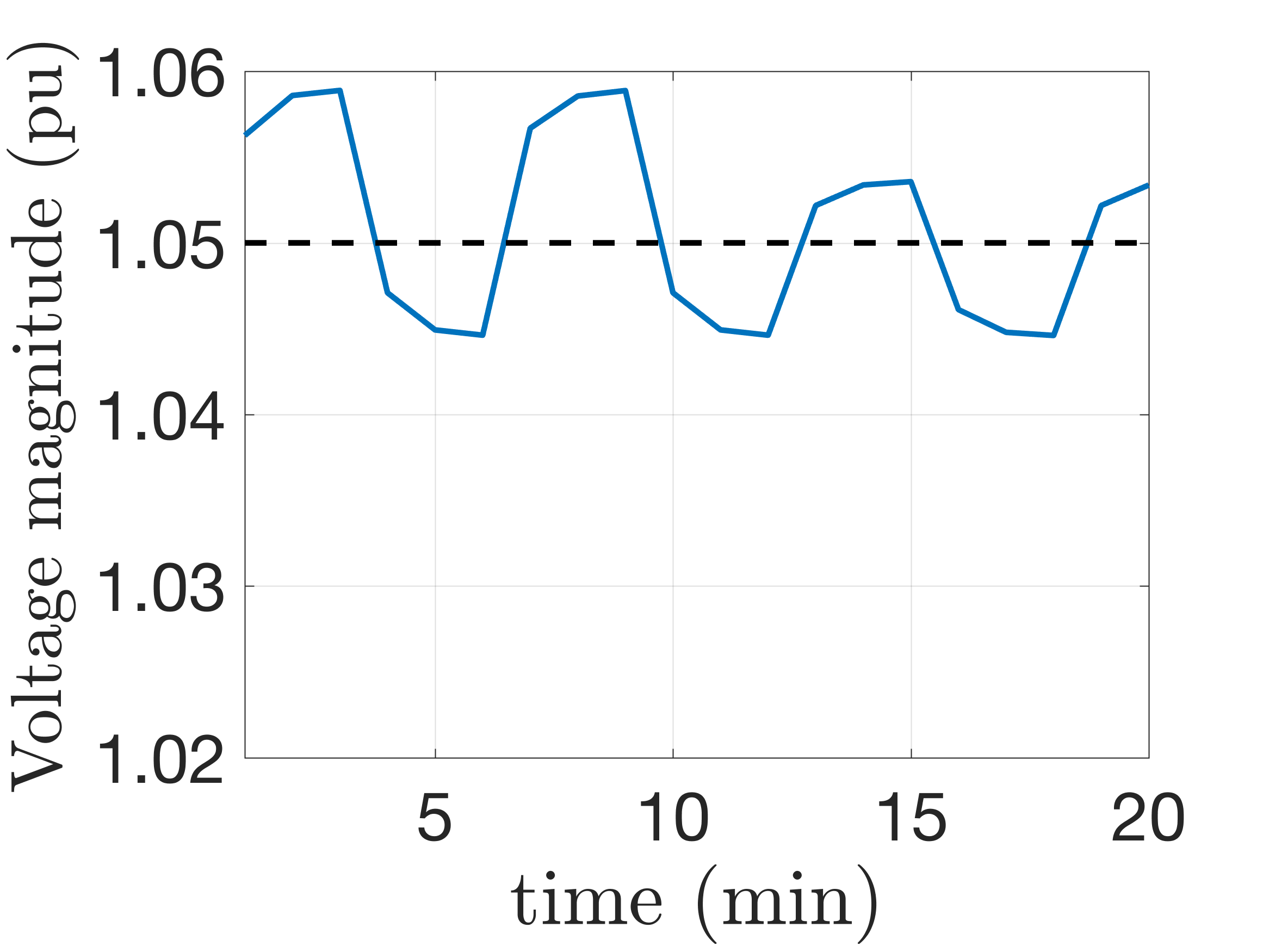}}
\vspace{-8pt}
\caption{Reverse power flow studies on (P2): (a) Comparison between predicted and actual head-node power under reverse power flow when C2 holds which leads to an exact solution, (b) Admissible voltages under reverse power flow when C2 holds, (c) Comparison between predicted and actual head-node power when C2 does not hold (C2 limit shown by black dotted line), leading to a solution that is non-exact, (d) Voltage violation due to the solution being non-exact when C2 does not hold.} \label{fig:sim1}
%\vspace{-20pt}
\end{figure}

Note that the solve time for the above optimization problem is typically less than a second for a feeder with 37 nodes. Our prior work on large scale three-phase systems has shown that the OPF scales well and can be solved in under a minute~\cite{nazir_optimal3phase}. The next section develops the real-time control of VBs to achieve tracking of grid reference by countering variations in net-demand and also to provide support under contingencies. In Section \ref{sec:sim}, the OPF is integrated with the real-time corrective control action to show the effectiveness of the combined approach through simulation results.
% \begin{figure}[t]
% \centering
% \includegraphics[width=0.35\textwidth,trim={0 0 0 1cm},clip]{}
% \caption{IEEE 37 node system}
% \label{fig:IEEE-37node}
% \end{figure}

% \begin{figure}[t]
% \centering
% \includegraphics[width=0.35\textwidth,trim={0 0 0 1cm},clip]{}
% \caption{Tracking grid economic reference using first-order loss approximation}
% \label{fig:P_track}
% \end{figure}
\section{Resilient \& Corrective control \\ of Virtual Batteries}\label{sec:rr}

The optimal VB set-point dispatcher depicted in Fig. \ref{fig:block_model} solves (P2) about every minute  and provides optimal active power set-points to all VBs in the feeder, such that the economic power trajectory for each feeder is tracked optimally. The aggregate of DERs that make up these VBs is then expected to provide and maintain those power set-points until the next set-point update. However, due to the nature of distribution feeders with solar PV, there will inevitably be short-term fluctuations in net-demand, which act as disturbances within a feeder. The flexibility available from the VBs can be used to mitigate these \textit{intra}-feeder disturbances, by \textit{correcting} the set-points provided by the optimal VB set-point dispatcher. Furthermore, for a utility with multiple feeders connected to a substation, one feeder may suffer from larger disturbances, e.g., forecast errors not accounted for in (P2), cyberattacks on the VBs' or DERs' communication channels, and changes to network topology from local outages. In these cases, it becomes important for the system to be \textit{resilient} \cite{taft2017electric} and maintain the economic set-point provided by the market despite such \textit{inter}-feeder disturbances.  In this section, we hence provide a \textit{resilient} and \textit{corrective} control mechanism for mitigating these \textit{intra}-feeder and \textit{inter}-feeder disturbances by leveraging the flexibility of VBs. This ensures that feeders with high penetration of solar PV can be effectively dispatched to provide energy market services. \textcolor{black}{This “real-time” corrective action will be executed in the order of a few hundred milliseconds.}

The ideas proposed here are similar to existing ideas in wide-area control, including primary (droop) control and automatic generation control (AGC) of frequency ~\cite{kundur1994power}, but they are adapted for distribution system operations and regulating power. 
%This is because in designing for both the intra-feeder and inter-feeder controllers, there is a similar need, as in droop control or AGC, to take into account the operation time-scales and scales of the magnitude of disturbance entering the system. 
For example, the intra-feeder control mechanism is needed to respond to small disturbances within a feeder without invoking VBs in other feeders and hence needs to operate on a fast time scale (similar to primary frequency control). The inter-feeder control, on the other hand, needs to operate only when there is a large contingency that necessitates a response from VBs present in all the feeders, and thus its response is expected to be slower than intra-feeder control, which is similar to automatic generation control (AGC). Unlike primary frequency control and AGC, we also need to take into account saturation in the energy and power of the VBs while designing for and implementing the real-time controllers. 

\subsection{Overview of Intra-Feeder Control of Virtual Batteries}\label{ss:ifc}
The purpose of the intra-feeder control scheme is to reject short-term disturbances (like solar PV/demand fluctuations) that enter the nodes inside a feeder and maintain the economic market set-point at the feeder's head-node (i.e., substation). Our proposed intra-feeder control scheme essentially consists of a bank of proportional controllers, one to control each VB in the feeder. For this purpose, the only measurement required is the active power at the head-node of the feeder, which is readily available at the substation. The block diagram in Fig.~\ref{fig:intrafeeder} represents one feeder with $n$ VBs and disturbances in the form of uncontrollable, unscheduled nodal injections. The VBs are modeled as mentioned in Section~\ref{sec:VB_model} where the power and energy bounds of the VBs are represented by the saturation blocks. $P_{uf}$ denotes the corrected economic reference head-node power for this feeder, as calculated by the inter-feeder controller described in the next section, and needs to be tracked by this intra-feeder control scheme. $P_h$ denotes the head node power of the feeder. 
% The error, $P_e$, is then computed as $P_e= P_{\textrm{econ}}- {P_h}$. The proportional controllers, $K_r$, $r=1,2,\dots,n$, adjusted further by $K_{\textrm{adj},r}$ (see below), then multiply this error to provide the $r$th VB with a correction, $P_{ur}$, in the power set-point by the optimal VB set-point dispatcher. 
\textcolor{black}{ The corrected VB set-point for the $r$th VB, ${p}_{\textrm{in},r}$, is then obtained as follows:
\[
{p}_{\textrm{in},r}=K_{r}K_{\textrm{adj},r}\left(P_{uf}-P_{h}\right)+P_{\textrm{set},r}\] where $P_{\textrm{set},r}$ refers to the set-point provided by the optimal set-point dispatcher about every minute, $K_r$ is a proportional controller that is further adjusted by a factor $K_{\textrm{adj},r}$ (as defined in \eqref{eq:adj}, see below).
The net VB power, $p_{b,r}$, (computed from \eqref{eq:batt_1order}) is injected into the ``Feeder" block (which represents the feeder with nodal active power injections as input and its head node active power as output) at the respective location. The ``Feeder" block is linearized when designing the gains, as will be described later.} The plant that is controlled here thus consists of the VBs, together with the grid. 
        %The net VB power, $p_{b,r}$, is also integrated (as per Eq. (8) of the paper, the simplified, discretized version of Eq. (4)) and saturated, with the SoC limits of the VB (as per Eq. (6) of the paper) to obtain the SoC signal, $B_r$. 
        Estimates of the SoC signal, $B_r$, computed by the DSO based on DER models, and the net VB power signals, $p_{b,r}$, obtained from the VB interfaces, are used to dynamically modify the adjustment factor for the proportional gain (described below), $K_{\textrm{adj},r}$ about every minute (this adaptive and dependent behavior of $K_{\textrm{adj},r}$ is indicated by slanted arrows in Fig. \ref{fig:intrafeeder}). This adjustment factor ensures that an unreasonable value of power that might deplete or fully charge the VB is not demanded from the VB. The computation of the control input is performed at the nodal service transformers that are monitored by the DSO. To obtain the response from the set of DERs in practice, however, we are assuming that a device level algorithm (like the priority-based scheme in [30]) is in place to ensure that the DERs can provide the response that is desired from them. This device-level dispatch can be performed every few hundreds of milliseconds.

\begin{figure}
\centering
\includegraphics[width=1\columnwidth]{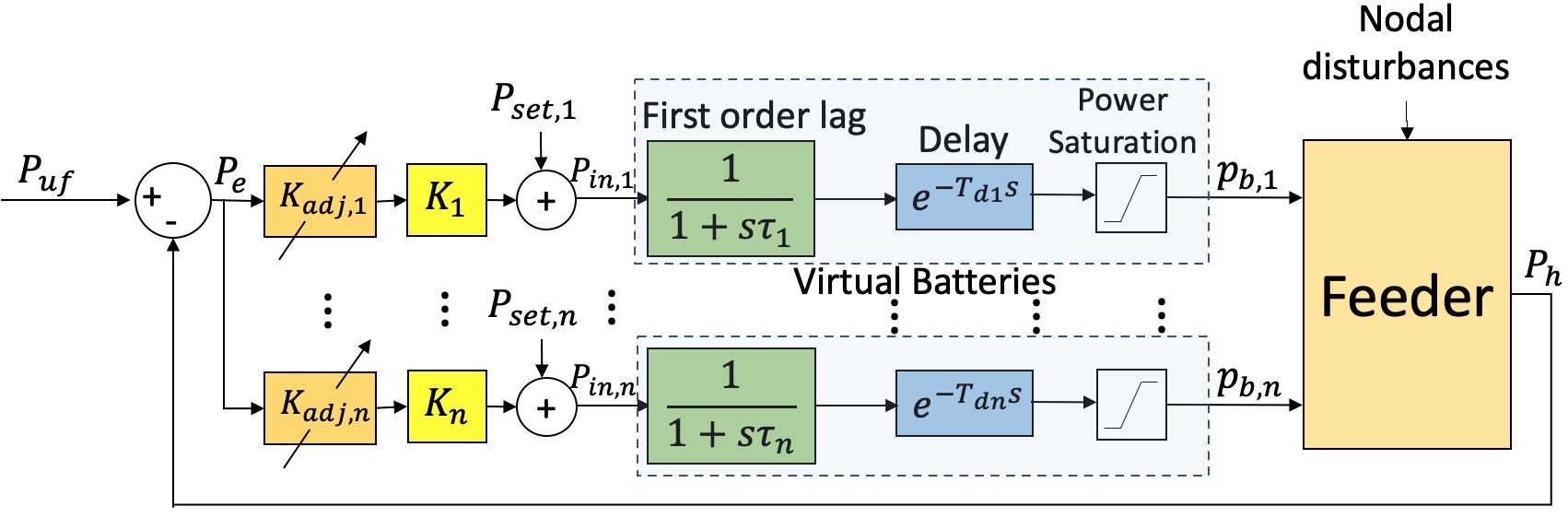}
\vspace{-10pt}
\caption{Intra-feeder Control.  \textcolor{black}{The adjustment factors $K_{adj,r}$ are adaptive, and depend on the VB power, $p_{b,r}$ and the SoC estimate, $B_r$, (obtained through the VB interface) as given by \eqref{eq:adj}. Note that the signal $P_{uf}$ is updated about every 5 minutes and $P_{set,r}$ are updated about every minute, thus operating at a slower time-scale compared to the other signals.}}
\label{fig:intrafeeder}
% \vspace{-15pt}
\end{figure}

To ensure a good disturbance rejection capability, we select the proportional gains, $K_r$, optimally, using an approach similar to standard LQR.  Specifically, we assume that the system (i.e., the feeder) is affected by nodal injections that are stochastic. This is reasonable since solar PV and demand fluctuations are typically random. The objective is then to reduce the effect of these injections on deviating the head-node power from the economic reference. Alternatively, we can treat the reference signal itself as stochastic and reduce the standard deviation of the tracking error. To do that, first, the system is linearized by 1) ignoring the saturation blocks, and 2) linearizing the AC power flow equations about an operating point set by (P2). The latter is done by finding the sensitivity of the head node active power of the feeder to the active power injection at the particular nodes where the VBs are situated. \textcolor{black}{The time-delay blocks are also ignored, but to account for time-delays, the controller design is done in such a way that the gain-crossover frequency of the open-loop system, or the bandwidth of the closed-loop system (without delays), is less than $1/5T_{d,\textrm{max}}$, where $T_{d,\textrm{max}}$ is the maximum time delay in the system. Effectively, this ensures that the delay-free system and the delayed system behave similarly.} Then, assuming that the system is excited by zero-mean wide-sense stationary Gaussian inputs, we minimize the sum of the variance of the error, $P_e$, denoted by $\sigma_{P_e}^2$, and a weighted sum of the variances of the control inputs to each VB, $P_{u1}$, $\dots$, $P_{un}$, denoted by $\sigma_{P_{u1}}^2$, ..., $\sigma_{P_{un}}^2$ respectively: 
\begin{equation*}\label{eq:nomrp}
\begin{array}{cc}
\min_{K} & \sigma_{ P_{e}}^{2}(K)+\sum_{r=1}^{n}\rho_r\sigma_{P_{ur}}^{2}(K),
%textrm{over } K_r\\
\end{array}
\end{equation*}
where $K = [\begin{array}{cccc}
K_{1} & K_{2} & \ldots & K_{n}\end{array}]^{\top}$.
Here, $\sigma_{P_e}(K)$ and $\sigma_{P_{ur}}(K)$ are related to the standard deviation of the reference via the $\mathcal{H}_2$ norm of the transfer function, and to the gains $K_r$ through the Lyapunov equation, $A\Sigma+\Sigma A^{\top}+BB^{\top}=0$, where $A,B$ are system state matrices and $\Sigma$ is the state covariance matrix. See \cite{Cevat} for details. $\rho_r>0$ is a constant penalty parameter that we design to be inversely proportional to the power capacity of the $r$th VB. This penalizes power extraction from VBs that have a lower capacity to provide power output, thus resulting in a \textit{constraint-aware} controller. \textcolor{black}{The above nonlinear optimization problem can be solved efficiently (typically within six seconds for 30 VBs or less) using a path following algorithm \cite{hassibi1999path}.}
% {\color{red}The problem is convex (the objective is the sum of squares of two norms, with no additional constraints), and hence it can be solved efficiently using a nonlinear solver.}

Note that there is a trade-off to be considered while choosing these constants. If $\rho\ll 1$, the control effort is less penalized, and hence, the resulting gains would be large (allowing larger control effort), and thus, VBs would saturate before they can assist in recovering the total head node power during a major disturbance. On the other hand, if $\rho\gg 1$, the control effort is penalized, and hence the resulting small gains may not be sufficient for better disturbance rejection capability. Thus, these constants should be chosen by taking into account this trade-off. %This results in a convex optimization problem that can be solved efficiently

% To calculate $\sigma_{P_e}$ and $\sigma_{P_{ur}}$ in the above convex optimization problem, we first place a filter (for example, with a transfer function  $1/(s+1)$) after the reference signal $\Delta P_{\textrm{econ}}$ in Fig. \ref{fig:intrafeeder}, and then compute the $\mathcal{H}_2$-norm of the transfer function from the input of the filter to $P_e$ and $P_{ur}$ respectively (the filter ensures that the $\mathcal{H}_2$-norm is finite).

 As a VB's energy state approaches its full capacity, the charging rate should be proportionately reduced, and when the energy state becomes empty, the discharging rate should be proportionately reduced. This helps to avoid a sudden step-change in power to zero when the VB saturates (either empty or full capacity).
To achieve this, the proportional VB controller gains, $K_r$, are then passed through an adjustment factor $K_{\textrm{adj,}r}$ (similar to standard gain scheduling) that is updated about every minute as follows:

{\small
\begin{align}\label{eq:adj}
K_{\textrm{adj,}r}=\begin{cases}
\begin{cases}
\left(\frac{B_{r}-\underline{B_{r}}}{\underline{B_{rl}} - \underline{B_{r}}}\right)^w, & \text{if }\underline{B_{r}}\leq B_{r}<\underline{B_{rl}}\\
1, & \text{else if }\underline{B_{rl}}\leq B_{r}\leq\overline{B_{r}}
\end{cases}, & \text{and } p_{b,r}<0\\
\begin{cases}
\left(\frac{\overline{B_{r}}-B_{r}}{\overline{B_{r}}-\overline{B_{rl}}}\right)^w, & \text{if }\overline{B_{rl}}<B_{r}\leq\overline{B_{r}}\\
1, & \text{else if }\underline{B_{r}}\leq B_{r}\leq\overline{B_{rl}}
\end{cases}, & \text{and } p_{b,r}>0\\
1, & \text{else if } p_{b,r}=0
\end{cases}
\end{align}
}where $\underline{B_r}$ and $\overline{B_r}$ represent the lower and higher bounds on the state of charge of the $r$th VB respectively, $B_{r}$ represents the state of charge of the $r$th VB, $p_{b,r}$ is the power output of the $r$th VB, $w$ is a constant greater than 1, and $\underline{B_{rl}}$ and $\overline{B_{rl}}$ are some chosen constants such that $\underline{B_{r}}<\underline{B_{rl}}<(\underline{B_{r}}+\overline{B_{r}})/2<\overline{B_{rl}}<\overline{B_{r}}$.
\vspace{-10pt}

\subsection{Overview of Inter-Feeder Control}
\begin{figure}[t]
\includegraphics[width=1\columnwidth]{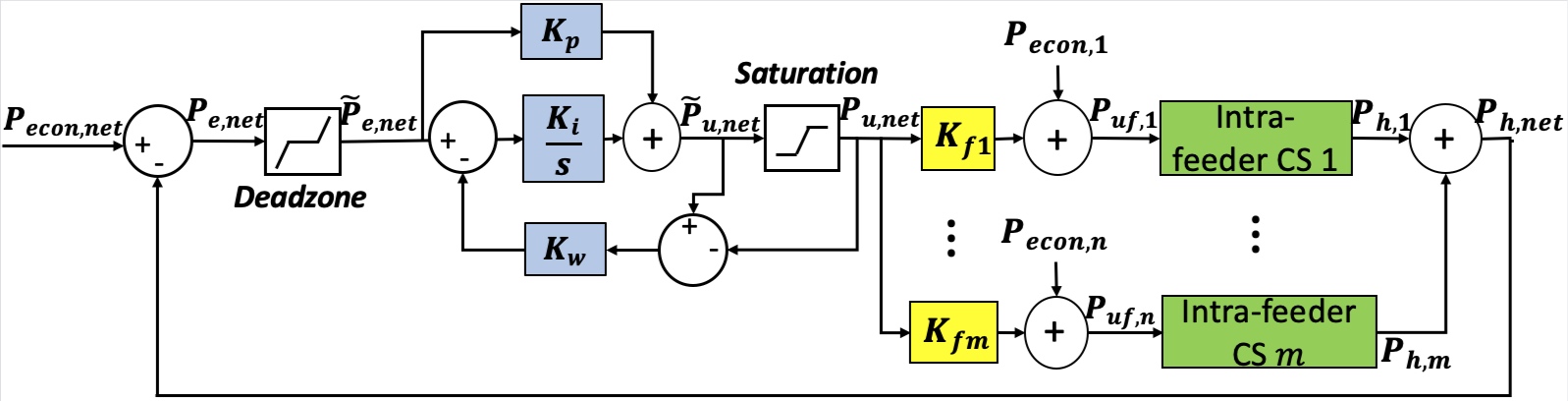}
\vspace{-20pt}
\caption{Inter-feeder Control}
\label{fig:interfeeder}
\vspace{-10pt}
\end{figure}

The intra-feeder control action described above is useful for mitigating small nodal disturbances that arise inside one feeder. However, to mitigate larger disturbances, such as if many of the VBs in one feeder saturate, the VBs inside one feeder may not be sufficient, and the flexibility of VBs in all the feeders will be required to reject the larger disturbance until (P2) updates the set-points.
The block diagram of the real-time control system for mitigating these inter-feeder disturbances is shown in Fig. \ref{fig:interfeeder}. \textcolor{black}{It is essentially a PI control scheme that corrects the economic set-point references to the $m$ intra-feeder control systems, the structure of each of which is shown in Fig. \ref{fig:intrafeeder}.} To ensure that the PI control is effected only when there is an appreciable change in the total head node power, a dead zone is also implemented before the PI controller. Moreover, a standard anti-windup mechanism is implemented inside this control scheme. This ensures that if all VBs inside all the feeders saturate in power, the PI control action is cut-off to avoid any integral windup. \textcolor{black}{This control scheme requires measurements from the head node powers from all the feeders. Since there is communication overhead involved in receiving the head node powers from all the feeders, the inter-feeder control action is proposed to be effected only about every 5 s.}

\textcolor{black}{
The working of the inter-feeder control system is described as follows. The sum of measured head node active powers from all feeders, denoted by $P_{\textrm{h,net}}$, is compared with the total economic market set-point for all feeders, $P_{\textrm{econ,net}}$. Then, for each time step $k$ (executed about every 5 s), the control input to the $i$th intra-feeder control system, $P_{uf,i}[k]$, is computed as follows:
\begin{align*}
P_{uf,i}[k]=K_{fi}P_{u,\textrm{net}}[k]+P_{\textrm{econ},i}[k],
\end{align*}
where $P_{u,\text{net}}$ is the output from the saturation block:
\begin{align*}
P_{u,\textrm{net}}[k]=\textrm{sat}\left(\widetilde{P}_{u,\textrm{net}}[k]\right)=\begin{cases}
\overline{P_{u,\textrm{net}}}, & \widetilde{P}_{u,\textrm{net}}[k]>\overline{P_{u,\textrm{net}}}\\
\widetilde{P}_{u,\textrm{net}}[k], & \underline{P_{u,\textrm{net}}}\leq \widetilde{P}_{u,\textrm{net}}[k]\leq\overline{P_{u,\textrm{net}}}\\
\underline{P_{u,\textrm{net}}} & \widetilde{P}_{u,\textrm{net}}[k]<\underline{P_{u,\textrm{net}}},
\end{cases}
\end{align*}
and $\widetilde{P}_{u,\text{net}}$ is given by the integrator as 
\begin{align*}
\widetilde{P}_{u,\textrm{net}}\left[k+1\right]&=\widetilde{P}_{u,\textrm{net}}\left[k\right]+K_{p}\left\{ \widetilde{P}_{e,\textrm{net}}\left[k+1\right]-\widetilde{P}_{e,\textrm{net}}\left[k\right]\right\}\\
& +K_{i}\left\{ \widetilde{P}_{e,\textrm{net}}\left[k\right]-K_{w}\left(\widetilde{P}_{u,\textrm{net}}\left[k\right]-P_{u,\textrm{net}}[k]\right)\right\},
\end{align*}
where
\begin{align*}
{P}_{e,\textrm{net}}[k]&= P_{h,\textrm{net}}[k]- P_{\textrm{econ,net}}[k],\\
\widetilde{P}_{e,\textrm{net}}[k] &=
\begin{cases}
{P}_{e,\textrm{net}}[k], & \left|{P}_{e,\textrm{net}}[k]\right|>P_{ed}\\
0, & \left|{P}_{e,\textrm{net}}[k]\right|\leq P_{ed}
\end{cases}
\end{align*}
% \begin{gather*}
% {P}_{e,\textrm{net}}[k]= P_{h,\textrm{net}}[k]- P_{\textrm{econ,net}}[k]\\
% \widetilde{P}_{e,\textrm{net}}[k]=\begin{cases}
% {P}_{e,\textrm{net}}[k], & \left|{P}_{e,\textrm{net}}[k]\right|>P_{ed}\\
% 0, & \left|{P}_{e,\textrm{net}}[k]\right|\leq P_{ed}
% \end{cases}\\
% \widetilde{P}_{u,\textrm{net}}\left[k+1\right]=\widetilde{P}_{u,\textrm{net}}\left[k\right]+K_{p}\left\{ \widetilde{P}_{e,\textrm{net}}\left[k+1\right]-\widetilde{P}_{e,\textrm{net}}\left[k\right]\right\}\\
% +K_{i}\left\{ \widetilde{P}_{e,\textrm{net}}\left[k\right]-K_{w}\left(\widetilde{P}_{u,\textrm{net}}\left[k\right]-\textrm{sat}\left(\widetilde{P}_{u,\textrm{net}}\left[k\right]\right)\right)\right\} \Delta T\\
% P_{u,\textrm{net}}[k]=\textrm{sat}\left(\widetilde{P}_{u,\textrm{net}}[k]\right)=\begin{cases}
% \overline{P_{u,\textrm{net}}}, & \widetilde{P}_{u,\textrm{net}}[k]>\overline{P_{u,\textrm{net}}}\\
% \widetilde{P}_{u,\textrm{net}}[k], & \underline{P_{u,\textrm{net}}}\leq \widetilde{P}_{u,\textrm{net}}[k]\leq\overline{P_{u,\textrm{net}}}\\
% \underline{P_{u,\textrm{net}}} & \widetilde{P}_{u,\textrm{net}}[k]<\underline{P_{u,\textrm{net}}}
% \end{cases}\\
% P_{uf,i}[k]=K_{fi}P_{u,\textrm{net}}[k]+P_{\textrm{econ},i}[k]
% \end{gather*}
and ${P}_{e,\textrm{net}}$ is the tracking error, $P_{ed}$ is the deadzone limit, $\widetilde{P}_{e,\textrm{net}}$ is the output of the deadzone, $K_w$ is the anti-windup gain, $K_p$ is the proportional controller gain and $K_i$ the integral gain in the PI controller, $\widetilde{P}_{u,\textrm{net}}$ is the unsaturated output of the PI controller,   $\overline{P_{u,\textrm{net}}}$ is the upper limit of the sum of control inputs to the intra-feeder control systems, $\underline{P_{u,\textrm{net}}}$ is the lower limit, $K_{fi}$ is a scaling factor, and $P_{\textrm{econ},i}$ is the economic reference for the $i$th feeder. $\overline{P_{u,\textrm{net}}}$ and $\underline{P_{u,\textrm{net}}}$ are computed by assuming that the VBs are at their power limits.
}
% To ensure that the total economic head node power is tracked well in a period of 1-5 minutes, the resulting tracking error $P_{e,net}=\Delta P_{\textrm{h,net}}-\Delta P_{\textrm{econ,net}}$ is scaled and integrated through a PI (Proportional-Integral) controller, $K_p+\frac{K_i}{s}$. The result $P_{\textrm{u,net}}$ is then saturated to avoid sending a control input that would saturate all the VBs in all the feeders. Then, the output of the saturation is scaled again appropriately by a gain $K_{fj}$ for each feeder $j$ to provide a real-time correction to the market set-point for that feeder. 

To penalize the extraction of power from feeders with lower capacity to provide power, the inter-feeder gains $K_{fi}$ are chosen in proportion to the total capacity of the feeder with which they are associated. If the feeder $i$ has a higher capacity, $K_{fi}$ is assigned a higher value, and if it has a lower capacity, $K_{fi}$ is assigned a lower value. Specifically, $K_{fi}=\overline{P}_{i}/\overline{P}$, where $\overline{P}_{i}$ is the power capacity of the $i$th feeder, and $\overline{P}$ is the total power capacity of all feeders. 

The PI gains are chosen to satisfy the desired requirement of phase margin and settling time. A high phase margin ensures that the system remains robust to VB failure, while a low settling time of less than a minute ensures that any large disturbance is quickly rejected before the optimal dispatcher provides new set-points to VBs every minute. To do this, first, the system is linearized in the same manner mentioned in Section \ref{ss:ifc}. Then, the phase margin of the open-loop transfer function from $P_{e,\textrm{net}}$ to $\Delta P_{h,\textrm{net}}$, as well as the settling time of the entire closed-loop system due to the response to a step input applied at $\Delta P_{\textrm{econ,net}}$, are computed for different values of $K_p$ and $K_i$, and the PI gains for which the phase margin is high and settling time is low is selected.

Fig. \ref{fig:pmst} shows an example of this process of selection of PI gains on a two-feeder system with 2 VBs in each feeder. The values of $K_p$ and $K_i$ were varied, keeping their ratio fixed, to study their effects on the phase margin and settling time. It can be seen that when $K_p=K_i$ (blue curve), there are some values of $K_p$ for which the system has a low settling time ($<30$ s, as indicated by the green dotted vertical line) but the phase margin varies significantly from 20$^{\circ}$ to 80$^{\circ}$. In that region, there is a value of $K_p$ and $K_i$ for which the settling time is the lowest, and phase margin is above 60$^{\circ}$, indicating the optimal value for the PI gains. Other values of $K_p$ and $K_i$, as those obtained from the $K_i = 0.1K_p$ and $K_i=10K_p$ curves, lead to sub-optimal performance.

% Note that small jumps in these plots do not significantly affect the choice: they arise only due to the definition of settling time, since the time when the step response enters the 2\% decision band for settling time is discontinuous.

% \begin{figure}[t]
% % \vspace{-10pt}
% \centering
% \includegraphics[width=0.46\columnwidth]{images/pmst.eps}
% \vspace{-10pt}
% \caption{Effect of PI gains on phase margin and settling time. The value of $K_p$ is varied from 0.01 to 24, and $K_i$ is then calculated according to the ratio $K_i/K_p$ mentioned in the legend.}
% \label{fig:pmst}
% \vspace{-10pt}
% \end{figure}

\begin{figure}
\centering
\vspace{-12pt}
\subfloat{\includegraphics[width=0.47\columnwidth]{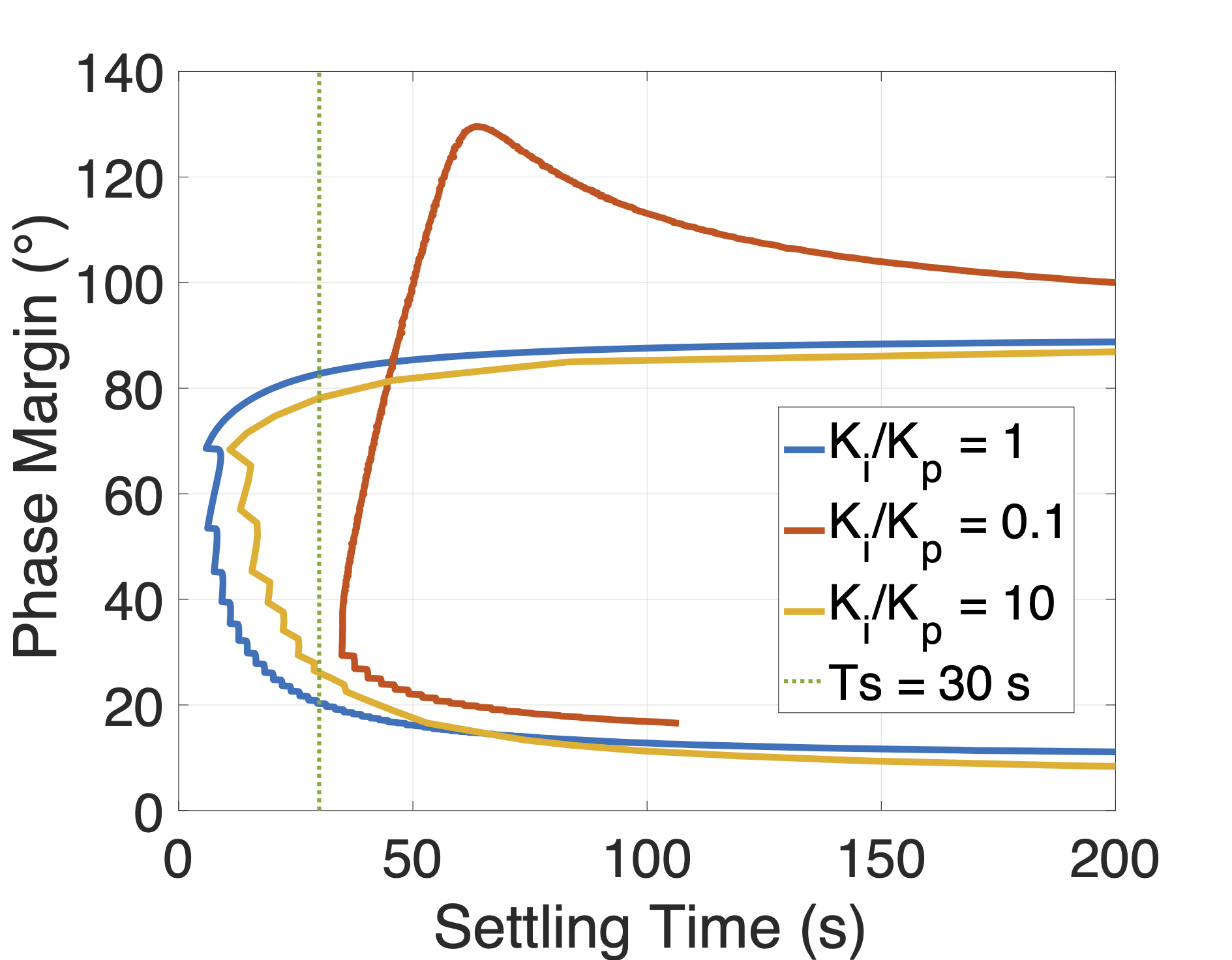}
\label{fig:pmst}}\hfill
\subfloat{\includegraphics[width=0.51\linewidth]{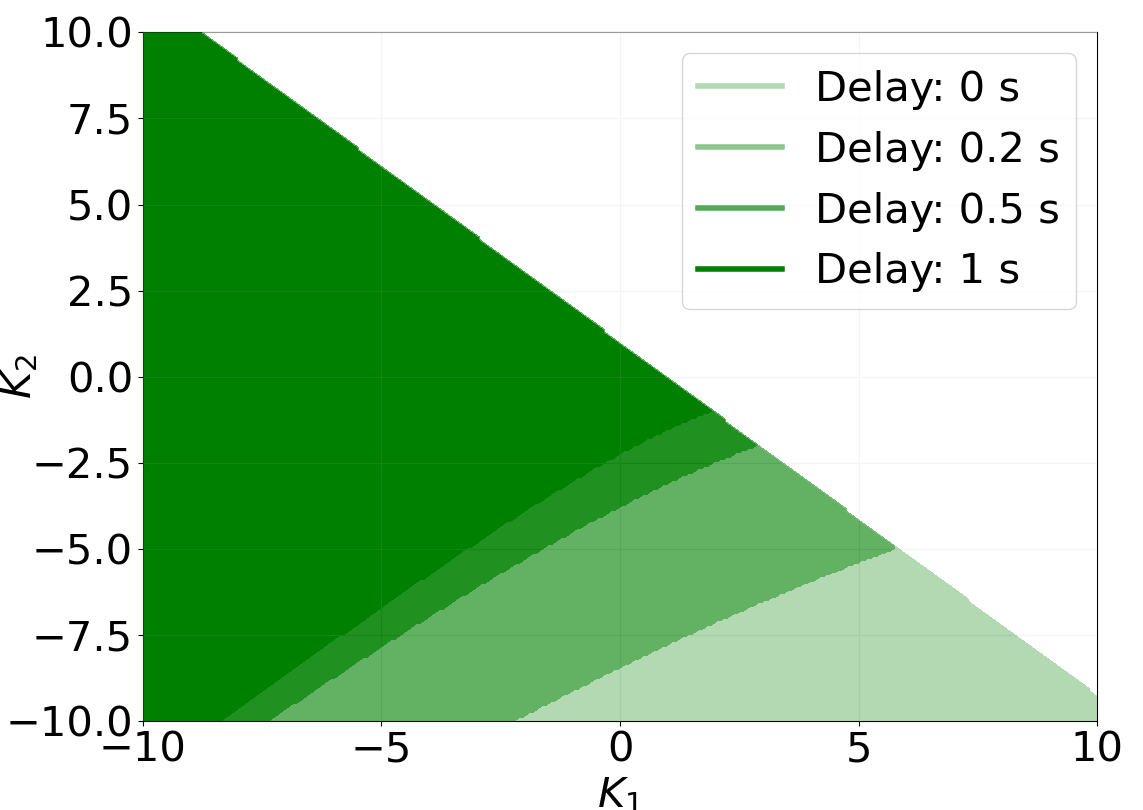}
\label{fig:delK}}
\caption{Region of Feasible Controllers. (\textit{Left}: a) Effect of PI gains on phase margin and settling time. The value of $K_p$ is varied from 0.01 to 24, and $K_i$ is then calculated according to the ratio $K_i/K_p$ mentioned in the legend. (\textit{Right}: b) Region of stable controller gains vs. delays.} 
\label{fig:pmstdelk}
\end{figure}
\subsection{Order and Frequency of Selection of Gains}\label{order}
 It is important to select/adjust the gains for both the intra-feeder and inter-feeder control mechanisms described above without hampering the real-time operation of the power system. If all the gains in both the inter-feeder and intra-feeder control mechanisms were chosen simultaneously, then the resulting optimization problem (if formed) would become intractable, hampering the real-time operation of the power system. Since it is important for VBs inside a feeder to respond to local disturbances quickly, without relying on the other feeders, we decouple the computation of the gains and design them in two stages: the proportional controllers inside each feeder are first chosen, considering each feeder to be separate, and then the PI controller to control all feeders is selected. This ensures the proper time-scale separation between intra-feeder and inter-feeder operation that was discussed before. Another benefit of this approach is that the optimization becomes more manageable as well.  Moreover, to ensure that the gains accurately reflect the current feeder structure, we propose that the gains be updated at the end of every 5 min. This renders our control scheme adaptive, of self-tuning regulator type. 
 {\scriptsize \begin{table}[!t]
    \centering
   
    \caption{Time required for PI Controller Tuning (Updated every 5 minutes)}
     \vspace{-10pt}
    \label{pitimes}
\begin{tabular}{ll}
\toprule
No. of feeders & PI Tuning Time (s) \\ 
\midrule
10             & 11.1         \\
5              & 5.6          \\
2              & 1.3          \\
\bottomrule
\end{tabular}
\vspace{-15pt}
\end{table}}
\textcolor{black}{Table \ref{pitimes} shows illustrative times (run on a laptop with 2.2 GHz Quad-Core Intel Core i7 processor and 16 GB RAM) required for the PI controller tuning every 5 min, as a function of the number of feeders. It can be seen that the time scales linearly with the number of feeders but is well less than 60 s. Real-time validation of this scheme on cyber-enabled devices through an OPAL-RT grid simulator is an ongoing work and will be the topic of a future publication.}

% \begin{table}[!t]
% 	\centering
% 	\caption{Time required for PI Controller Tuning}
% 	\label{pitimes}
% \begin{tabular}{|l|l|}
% \hline
% No. of feeders & PI Tuning Time (s) \\ \hline
% 10             & 11.07         \\ \hline
% 5              & 5.60          \\ \hline
% 2              & 1.27          \\ \hline
% \end{tabular}
% \end{table}

{\color{black}\subsection{Effect of Intra-Feeder Controller Gains and Delays on Stability}\label{sec:analysis}
% \begin{figure}[t]
% \centering
% \includegraphics[width=0.45\linewidth]{}
% \vspace{-10pt}
% \caption{Region of stable controller gains vs. delays.}
% \label{fig:delK}
% \vspace{-20pt}
% \end{figure}
In this subsection, an analysis of the effect of time-delays and controller gains on the stability of the control system is presented.
Consider the intra-feeder control diagram shown in Fig. \ref{fig:intrafeeder}, but with two controllable VBs (i.e., $n=2$). For analysis, let this system be linearized as mentioned in Section \ref{sec:rr}. The time-constants of the VBs are chosen to be $\tau_1=\tau_2=1$ s. \textcolor{black}{The region of stability was plotted (Fig. \ref{fig:delK}) using the system transfer function from $P_{\textrm{econ}}$ to $P_h$, replacing the delay with a third-order Pad\'e approximation (to extract poles).} Fig. \ref{fig:delK} shows the effect of a delay in the response of one of the VBs on the region of stability. \textcolor{black}{The delay in the second VB, $T_{d2}$, was varied from 0 to 200 ms, 500 ms, and 1 s (these values are chosen for the sake of illustration - in reality, delays may be smaller/larger), keeping no delay ($T_{d1}=0$) in the first VB. The regions of stability are shown (in Fig. \ref{fig:delK}) by different shades of green, with the lightest corresponding to no delay and the darkest corresponding to a delay of 1 s. Note that the regions depicted by the lighter shades of green \textit{include} those depicted by the darker.}

It can be seen that when there are no time-delays in either VB, the region of stability is a half-space. Indeed, it can be shown by analyzing the poles of the transfer function that the region of stability is given by $A_{p1}K_1+A_{p2}K_2>-1$ (this provides an analytic bound of permissible controller gains). It also indicates that there is a competition between the VBs: if one of the controller gains increases (and is hence more responsive to changes in head node power), the other must decrease to ensure stability. With time-delays, however, the region of stability is no longer a half-space but reduces to a triangular region (in the case of 2 VBs). Furthermore, this reduction is larger for VBs with greater communication delays. %The reduction of allowable values for the proportional controller gain is more in the case corresponding to the VB that has a greater communication delay. 

Thus, the above analysis indicates that for stability, allowable controller gains are limited by 1) the presence of other VBs, which introduce competitive behavior, 2)  communication delays, and that only a more restrictive set of gains is permissible for VBs with more communication delays. For an arbitrary grid, once the structure is known, one can do a similar analysis and determine the range of allowable controller gains that maintain system stability, which would be important for the DSO.}

\section{Simulation Results}\label{sec:sim}

\addtolength{\tabcolsep}{-0.5pt}
{\footnotesize\begin{table*}[t]
\caption{System Structure}
\vspace{-10pt}
\label{sysstruc}
\begin{center}
\begin{tabular}{llllll}
\toprule
Feeder & VB Locations & Active Power Noise Locations & Reactive Power Noise Locations & VB Time Constants, $\tau$ (s) & VB Delays, $T_{d}$ (s)\\
\midrule
1 & 712,722,706,703,727,708 & 705,712,727,728,730 & 712,718,707,727,729 &  1.0 1.8 1.4 1.6 1.9 1.5 & 0.8,0.8,0.5,0.7,0.7,0.3\tabularnewline
2 & 702,722,724,729,708,732 & 713,704,732,737,710 & 701,724,706,703,744,741 & 1.8,1.7,1.9,1.8,1.6,1.8 & 0.7,0.6,1.0,0.9,1.0,0.2\tabularnewline
3 & 701,705,703,730,734,741 & 701,705,704,714,727 & 705,707,725,727,740,736 & 1.3,1.1,1.3,2.0,1.8,1.6 & 0.9,0.5,0.3,0.9,0.4,0.6\tabularnewline
4 & 712,713,718,732,738 & 713,704,707,703,732,711 & 702,713,706,708,738,735 & 1.0,1.4,1.2,1.1,1.5 & 1.0,0.8,0.8,0.8,0.1\tabularnewline
5 & 724,709,731,737,736 & 712,709,737,738,740,735 & 714,725,744,731 & 1.8,1.7,1.1,1.3,1.9 & 0.3,0.5,0.3,0.1,0.5\tabularnewline
6 & 712,742,713,703,710 & 702,742,707,711,710 & 718,720,706,727,737,710 & 1.4,1.7,1.0,2.0,1.4 & 1.0,0.5,0.5,0.5,0.9\tabularnewline
7 & 722,744,728,738,711,736 & 701,705,714,707,732,711 & 705,744,731,733,711,710 & 1.1,1.6,1.7,1.30,1.5,2.0 & 0.5,0.1,0.6,0.6,0.8,0.7\tabularnewline
8 & 724,737,738,710,735 & 713,704,714,744,736 & 714,720,730,709,711 & 1.7,1.5,2.0,1.5,1.3 & 0.5,0.9,0.6,0.8,0.6\tabularnewline
9 & 701,718,737,741,735 & 724,703,744,731,708,710 & 742,724,709,737,710,735 & 1.7,1.5,2.0,1.1,1.1 & 0.8,0.5,0.7,0.4,0.9\tabularnewline
10 & 712,713,707,711,736 & 720,727,728,709,733,741 & 731,737,711,710,736 & 1.3,1.4,1.3,2.0,1.6 & 0.8,0.5,0.7,0.4,0.9\tabularnewline
\bottomrule
\end{tabular}
\end{center}
\vspace{-20pt}
\end{table*}}

% {\footnotesize
% \begin{table}[t]
% \caption{Nominal Demand in each Feeder}
% \label{tab:nomDem}
% \begin{tabular}{|l|l|l|l|l|l|l|l|l|l|l|}
% \hline
% Node & 2 & 6 & 9 & 10 & 19 & 20 & 21 & 27 & 29 & 30 \\ \hline
% $P_{0}$ (kW) & 28 & 1.6 & 3.4 & 17 & 8.4 & 8.4 & 8.4 & 17 & 28 & 25.2 \\ \hline
% $Q_{0}$ (kVAR) & 14 & 0.8 & 1.6 & 8 & 4.2 & 4.2 & 4.2 & 8 & 14 & 12.4 \\ \hline
% \end{tabular}
% \end{table}
% }
To test the real-time VB corrective control mechanism along with the optimal tracking, 10 IEEE-37 node feeders (single-phase equivalents) were simulated (Fig. \ref{fig:ps}), with VBs at certain randomly picked locations inside each feeder, and active and reactive power noise sources at certain randomly picked locations, as shown in Table~\ref{sysstruc}.
\textcolor{black}{Both the OPF presented in Section \ref{sec:convex_form} and the real-time control mentioned in Section \ref{sec:rr} are simulated together in this section, with the real-time control using the full VB model \eqref{eq:batt_update}-\eqref{eq:batt_plimit} at every time-step and the OPF using the reduced VB model \eqref{eq:B_1}-\eqref{eq:B_3} every minute.} The parameters $\tau$ and $T_d$ of the VBs are also listed in Table~\ref{sysstruc}. Each feeder was assumed to have the same nominal active and reactive power demand that is present in the first phase of the standard IEEE-37 node feeder (which is about 727 kW in active power demand and 357 kVAR in reactive power demand). A unity power factor is assumed for the VBs. \textcolor{black}{The sample time for the simulations was assumed to be 500 ms.} The anti-windup gain was assumed to be 1, and the dead zone limit in the tracking error to be 72.7kW, which is 10\% of the total base demand in each feeder. The active power limits of the VB was set to $\pm 24.2$ kW when there were 6~VBs in the feeder, and at $\pm 29$ kW when there were 5~VBs in the feeder, such that the total capacity of the VBs in each feeder represents 20\% of the nominal demand. The energy bound of each VB was assumed to be $97$ kWh (where there were 6 VBs in the feeder) and 116 kWh (where there were 5 VBs), thus, representing a maximum 4-hour continuous operation of each VB at its power limit.  This is a reasonable scenario for many DERs, including water heaters and residential batteries. The controllers were designed according to the process mentioned in Section~\ref{sec:rr}.

	\begin{figure}[t]
		    \centering
		    \includegraphics[width=\columnwidth]{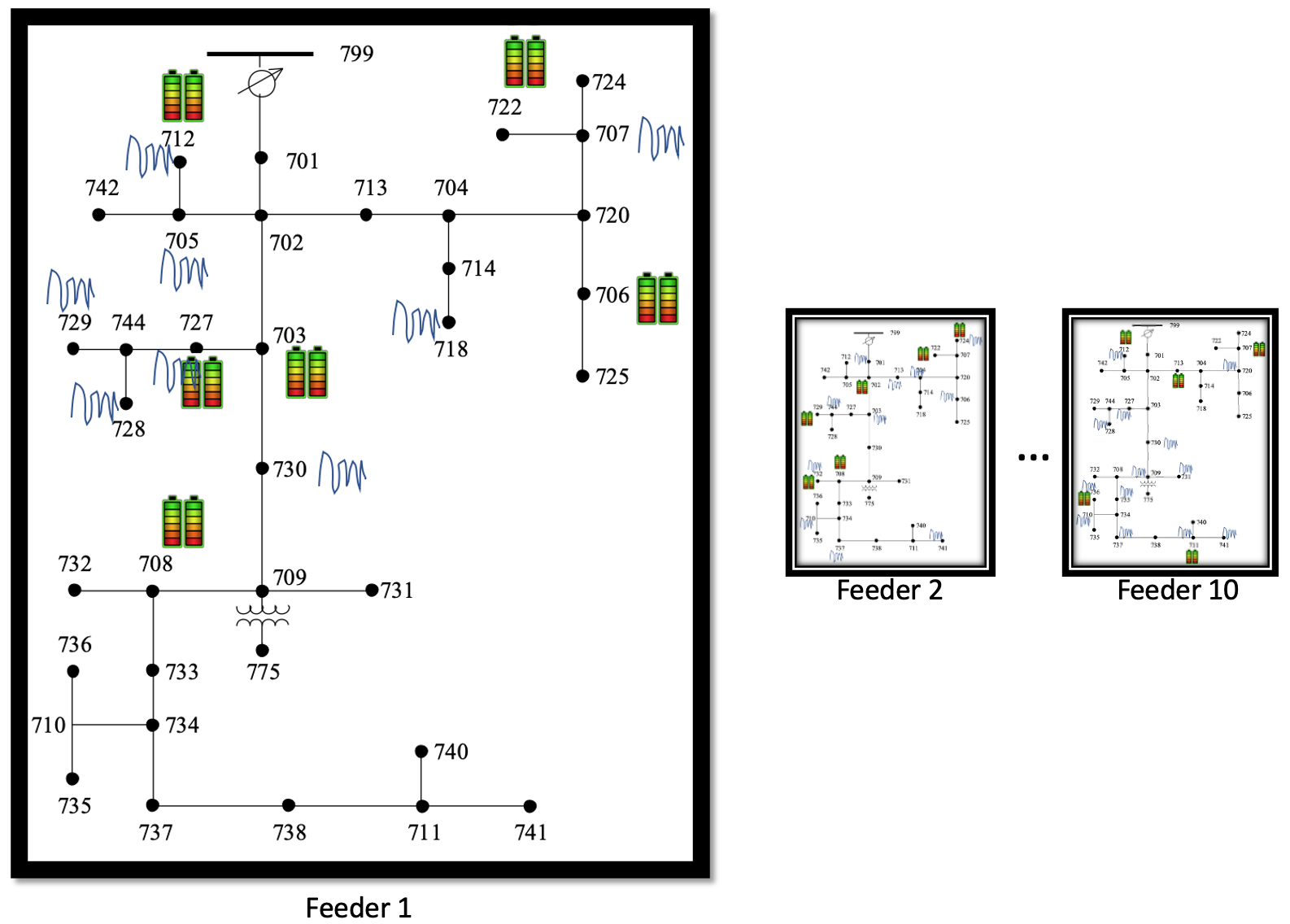}
		     \vspace{-10pt}
		    \caption{Power System Structure}
		    
		    \label{fig:ps}
		    \vspace{-10pt}
		\end{figure}

\begin{figure}[t] 
\centering 
    \includegraphics[width=0.8\columnwidth]{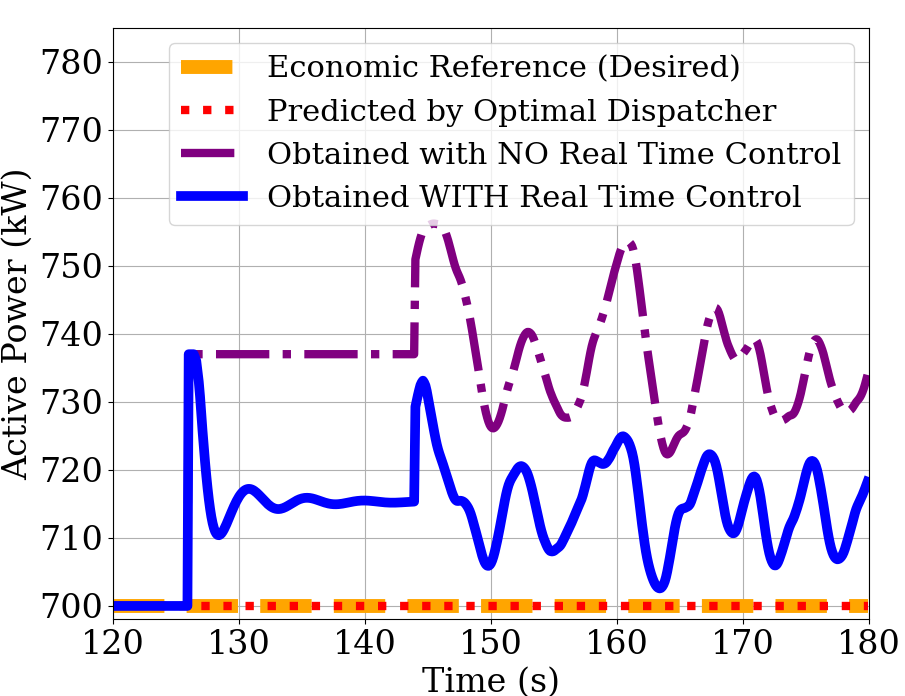}
    % \vspace{-10pt}
\caption{Intra-feeder disturbance rejection}
\label{fig:intrasim}
% \vspace{-15pt}
\end{figure}
% \vspace{-10pt}
\subsection{Intra-feeder disturbance rejection scenario}
We first illustrate the rejection of intra-feeder disturbances such as fluctuations in solar PV, which represent clouds obstructing solar PV panels. Since (P2) operates every minute, such disturbances would have to be mitigated within a minute. Also, since such a disturbance is local, other feeders should not be actuated to reject such a disturbance. The results for a 1-minute simulation are shown in Fig. \ref{fig:intrasim}. After small modeling errors were corrected for by two consecutive runs of (P2) at $t=60$ s and $t=120$ s, a step disturbance was introduced at around 125 s into all the nodes of Feeder 1 at the active and reactive power noise locations shown in Table~\ref{sysstruc}. This step disturbance can represent, for example, a big cloud cover obstructing the sun. From 145 s to 180 s, random noise was injected into the same nodes instead of a step disturbance. This can represent small clouds intermittently covering the sun. The blue curve denotes the total actual head node power from all feeders. The orange dashed line depicts the economic market set-point desired to be met by all the feeders. The total head node power predicted by employing VB set-points resulting from solving (P2), every minute, is shown by the red dotted line. The optimization horizon for (P2) is 10~minutes. The purple dash-dotted line shows the head node power if the VBs were optimally dispatched, but no real-time control mechanism, as described in Section \ref{sec:rr}, was in place. The results show that in the presence of a step disturbance, the proportional gains inside the feeder try to bring the power back towards the economic set-point. However, due to the nature of proportional control, there is a steady-state error. Moreover, if the noise is random, the intra-feeder control mechanism reduces the standard deviation of the random noise, as in Fig. \ref{fig:intrasim}, where the standard deviation reduces from 10.8 kW to 8.5 kW, a reduction by 26\%. Note that inter-feeder PI control is not activated in the above simulation since the disturbances are all within the dead zone of 72.7 kW.
% \vspace{-10pt}

\subsection{Inter-feeder disturbance rejection}
\begin{figure}[t] 
\centering 
\subfloat[Total Head node power]{\includegraphics[width=0.8\columnwidth]{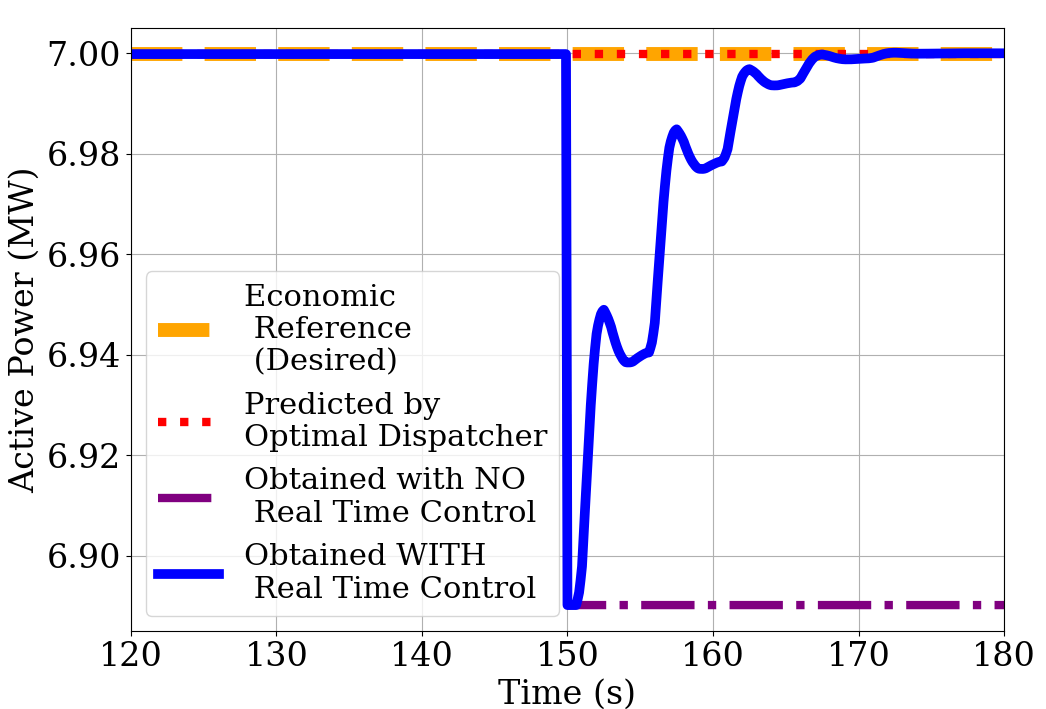}
 \label{fig:intersim}}
\hfil
\subfloat[Head node power from 9 feeders with no VB saturation]{\includegraphics[width=0.8\columnwidth]{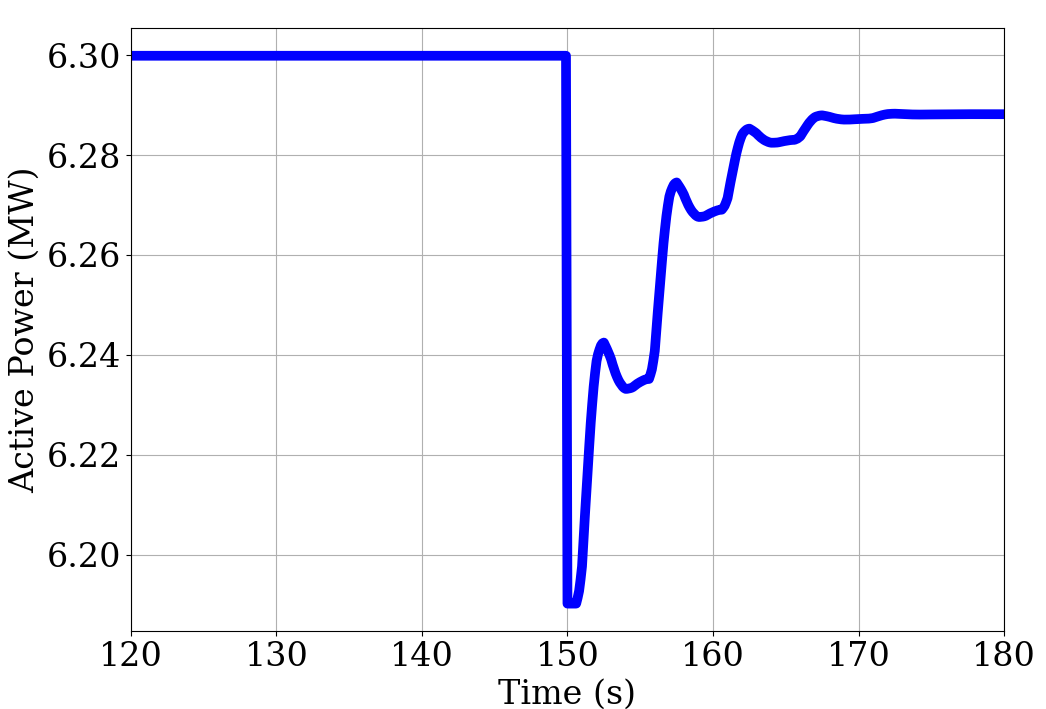}
\label{fig:intersim9other}}
\caption{Inter-feeder disturbance rejection scenario}
\vspace{-10pt}
\end{figure}
In this subsection, we illustrate the effect of rejecting an inter-feeder disturbance and thus, the resilience of the VB coordination framework. It is assumed that due to an adversarial cyberattack on a feeder, all the VBs in that feeder are set to their power limits.
Fig.~\ref{fig:intersim} shows the results of a 5-minute simulation with optimal PI controllers. At $t=150$ s, one of the ten feeders is attacked in an above-mentioned manner.  As a result of this attack, the power delivered by VBs suddenly change from 36.1 kW to 145.4 kW (an increase by 302.7\%). It is seen that PI control brings the head node power back to the economic set-point. Because other feeders also participate in this major disturbance, the head node power from them also changes after the disturbance, as illustrated in Fig.~\ref{fig:intersim9other}.

\begin{remark}[Voltage values]
In both the intra-feeder and inter-feeder disturbance rejection, the control mechanism successfully keeps the voltages within acceptable limits. In the examples shown above, voltages in the nodes of the feeders are all above 0.97 pu, with the maximum voltage being 1.0 pu (which is the fixed head node voltage that assumes reactive power support from tap changing transformers is separately available).
\end{remark}
\vspace{-20pt}
\textcolor{black}{
\subsection{Integration with Multiple runs of OPF}
In this subsection, a 2-min simulation is performed (Fig. \ref{fig:mfr}) to show the full hierarchy of the OPF and the real-time corrective control action in the case of VBs undergoing cyberattack. Specifically, two of the VBs in both the 9th and 10th feeder are assumed to be saturated (via attack) after 15 s, resulting in the PI control mechanism being activated, and obtained power (blue line) being restored to the desired value (yellow line). Note that without real-time control (purple dash-dotted line), restoration would not have been achieved. At the next run of the OPF at 1 min, however, even with no real-time control, the optimal set-point dispatcher adequately re-dispatches remaining VBs to result in power close to the desired power. Note that the spike in the blue line at 1 min exists because the VB set-points change, and so does the net real-time control action (which is the sum of set-point provided by the optimal set-point dispatcher and the corrective control).}
\begin{figure}
    \centering
    \includegraphics[width=0.9\columnwidth]{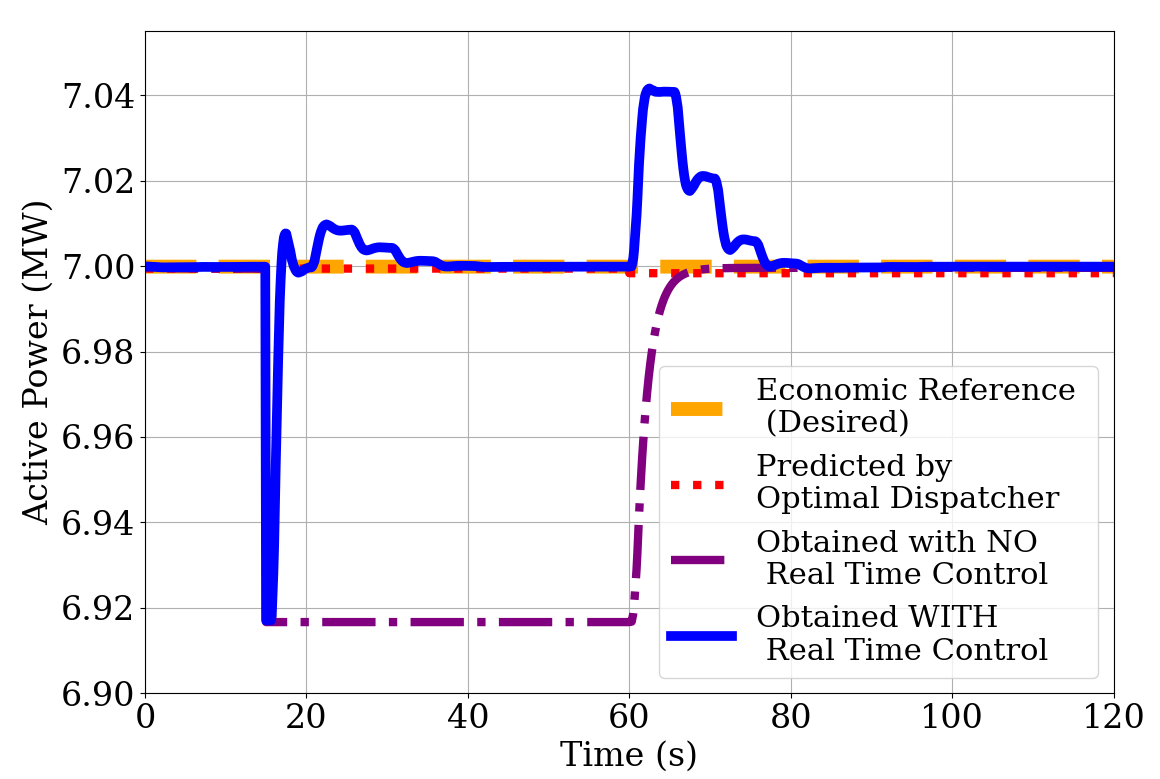}
    %  \vspace{-10pt}
    \caption{Simulation of the real-time controller integrated with multiple runs of OPF}
    \label{fig:mfr}
    % \vspace{-10pt}
\end{figure}

\vspace{-5pt}
\section{Conclusions and Future work}\label{conc}
In this paper, analysis and simulation results have been presented in support of a novel framework for large-scale coordination of DERs to support deep penetration of renewable energy. The explicit consideration of (temporal) energy and (spatial) grid constraints and the economic and reference-tracking (techno-economic) objectives have been achieved via a spatio-temporal decomposition approach that leverages information on demand-side flexibility to disaggregate grid economic trajectory into reference control signals for virtual batteries in distribution feeders. A convex optimal power flow (OPF) formulation has been presented that ensures a provably tight optimal dispatch of virtual batteries (VBs) to track an economic power trajectory. Resilient and real-time control techniques to adapt and recover from intra-feeder and inter-feeder disturbances have been analyzed and designed. To show the effectiveness of the decomposition approach and the real-time control mechanism, simulation results have been conducted on a modified IEEE-37 test system. 

Future work will incorporate reactive power control of VBs and extend the regulation of voltage with inverters. Further, the market economic problem will be considered explicitly~\cite{ChengdaACC} to study the coupling between the market layer economic problem and the feeder constraint aware dispatch. Future work will also try to extend this approach to the three-phase unbalanced system operation~\cite{nazir_optimal3phase}. \textcolor{black}{Finally, we are interested in extending the multi-period (P2) to a robust formulation to trade off conservativeness of dispatch and the probability of voltage and VB violations~\cite{nazir2020stochastic}.}

% {\color{red} WHERE IS FUTURE WORK?}

\section{Acknowledgements} 
{\color{black} We are grateful for the four anonymous reviewers who helped improve the paper} and many helpful discussions with Enrique Mallada and Dennice Gayme from Johns Hopkins University (on economic reference signals); Soumya Kundu and Thiagaran Ramachandran from PNNL (on VB models and DSSE); Pavan Racherla from UVM (on grid modeling); {\color{black} and Kerrick Johnson from VELCO and Pete Maltbaek from Smarter Grid Solutions (on real-time implementation and grid communications).} 

\vspace{-5pt}
\bibliographystyle{IEEEtran}
\small\bibliography{fix.bib}
\end{document}